\newtheorem{corollary}{Corollary}
\newtheorem{proposition}{Proposition}
\newtheorem{lemma}{Lemma}
\def \M{\mathcal{M}}
\def \L {{\mathcal L }}
\def \E{{\textrm{E}}}
\def\M{ {\mathcal{M}} }
\begin{document}
\title{\Large Statistical Characterization of $\kappa$-$\mu$ Shadowed Fading
\thanks{* J. F. Paris is with the Dept. Ingenier\'ia de Comunicaciones, Universidad de M\'alaga,
M\'alaga, Spain. This work is partially supported by the Spanish Government under project TEC2011-25473.}
\thanks{** This work has been submitted to the IEEE for possible publication.
Copyright may be transferred without notice, after which this version may no longer be accessible.} }

\author{
\vspace{3mm}
\authorblockN{ \normalsize Jos\'e F. Paris}}


\maketitle
\begin{abstract}
This paper investigates a natural generalization of the
$\kappa$-$\mu$ fading channel in which the line-of-sight (LOS)
component is subject to shadowing. This fading distribution has a
clear physical interpretation, good analytical properties and
unifies the one-side Gaussian, Rayleigh, Nakagami-$m$, Ricean,
$\kappa$-$\mu$ and Ricean shadowed fading distributions. The three
basic statistical characterizations, i.e. probability density
function (PDF), cumulative distribution function (CDF) and moment
generating function (MGF), of the $\kappa$-$\mu$ shadowed
distribution are obtained in closed-form. Then, it is also shown
that the sum and maximum distributions of independent but
arbitrarily distributed $\kappa$-$\mu$ shadowed variates can be
expressed in closed-form. This set of new statistical results is
finally applied to the performance analysis of several wireless
communication systems.
\end{abstract}

\vspace{0mm}
\begin{keywords}
Wireless communications, fading Channels, one-side Gaussian, Rayleigh, Nakagami-$m$,
Ricean, $\kappa$-$\mu$, Ricean shadowed.
\end{keywords}

\IEEEpeerreviewmaketitle

\section{Introduction}

\PARstart{T}{he} $\kappa$-$\mu$ fading distribution provides a
general multipath model for a line-of-sight (LOS) propagation scenario
controlled by two shape parameters $\kappa$ and $\mu$. Some
classical fading distributions are included in the $\kappa$-$\mu$
distribution as particular cases, e.g. one-sided Gaussian,
Rayleigh, \mbox{Nakagami-$m$} and Ricean. In fact, the fitting of the
$\kappa$-$\mu$ distribution to experimental data is better than
that achieved by the classical distributions previously mentioned.
A detailed description of the $\kappa$-$\mu$ fading model and other related models such as $\eta$-$\mu$ and $\alpha$-$\mu$ can be
found \mbox{in \cite{Yacoub07}-\cite{Yacoub07b}.}

Shadowing can be introduced in a LOS multipath fading model in two
basic ways. The first way consists on assuming that the total
power, associated to both the dominant components and the
scattered waves, is subject to random fluctuations. The second way
relies on assuming that only the dominant components are
subject to random fluctuations. The first class of composite
multipath/shadowing models are named multiplicative shadow fading
models, while the second class of models are named LOS shadow
fading models.

The Ricean shadowed fading model is a LOS shadow
fading model that assumes the Ricean distribution for the
multipath fading and the Nakagami-$m$ distribution for the
shadowing. This model fits to the land mobile satellite (LMS) channel
experimental data \cite{Alouni03}, and recently, it has been shown
that provides an excellent experimental fitting to underwater
acoustic communications (UAC) channels \cite{Paris12}. In addition,
the Ricean shadowed fading distribution has good analytical properties and all its basic
statistical characterizations are given in closed-form, i.e. the probability density function (PDF) and
moment generating function (MGF)
are given in \cite{Simon05} and the cumulative density function (CDF) in \cite{Paris10}.

Since the $\kappa$-$\mu$ distribution includes the Ricean
distribution as a particular case, a natural generalization of the
$\kappa$-$\mu$ distribution can be obtained by a LOS shadow fading
model with the same multipath/shadowing scheme used in the Ricean shadowed
model. This paper investigates this new model, \mbox{the
$\kappa$-$\mu$} shadowed distribution, so called by analogy with
the Ricean shadowed distribution. It is shown in this paper that the
$\kappa$-$\mu$ shadowed distribution has three main properties:
\begin{itemize}
\item It is motivated by a clear underlying physical model. \item
It provides a remarkable unification of popular fading models:
one-side Gaussian, Rayleigh, Nakagami-$m$, Ricean, $\kappa$-$\mu$
and Ricean shadowed. \item It has good analytical properties; its
PDF, CDF and MGF are obtained in closed-form. The statistics of
the sum and maximum distributions can also be expressed in
closed-form.
\end{itemize}

The remainder of this paper is organized as follows. In
\mbox{Section \ref{pmodel}} the $\kappa$-$\mu$ shadowed
underlaying physical model is described. The PDF, CDF and MGF of
the $\kappa$-$\mu$ shadowed distribution are derived in
\mbox{Section \ref{statistics}}, while the sum and maximum
distribution are investigated in \mbox{Section
\ref{advancedstatistics}.} Some applications of this set of
statistical results to the performance analysis of wireless
communication systems are presented in \mbox{Section
\ref{numerical}}. Finally, some conclusions are given in
\mbox{Section \ref{conclusion}.}

\section{Physical Model}
\label{pmodel}

The fading model for the $\kappa$-$\mu$ shadowed distribution relies on a generalization of the physical
model corresponding to the $\kappa$-$\mu$ distribution \mbox{\cite{Yacoub07}.} We consider a signal
structured in clusters of waves which propagates in a nonhomogeneous
environment. Within each cluster the multipath waves are assumed to have scattered waves
with identical powers and a dominant component with certain arbitrary power.
While the intracluster scattered waves have random phases and similar delay times, the intercluster delay-time spreads are considered relatively large.
In contrast with the $\kappa$-$\mu$ model which
assumes a deterministic dominant component within each cluster, the $\kappa$-$\mu$ shadowed model assumes that the dominant components of
all the clusters can randomly fluctuate as a consequence of shadowing.

From the physical model for the $\kappa$-$\mu$ shadowed distribution, the signal power $W$
can be expressed in terms of the in-phase and quadrature components of the fading signal
as follows
\begin{equation}
\label{model}
W = \sum\limits_{i = 1}^n {(X_i  + \xi p_i )^2 }  + \sum\limits_{i = 1}^n {(Y_i  + \xi q_i )^2 ,}
\end{equation}
where $n$ is a natural number, $X_i$ and $Y_j$ are mutually independent Gaussian processes with $\E[X_i]=\E[Y_i]=0$, $\E[X_i^2]=\E[Y_i^2]=\sigma^2$,
$p_i$ and $q_j$ are real numbers,
and $\xi$ is a \mbox{Nakagami-$m$} random variable with shaping \mbox{parameter $m$} and $\E[\xi^2]=1$.
The interpretation of (\ref{model}) is the following. Each multipath cluster is modelled by one term of the sum;
thus, $n$ is the number of multipath clusters. The scattered components of the $i$th cluster are represented by the
circularly symmetric complex Gaussian random variable $X_i+j Y_i$. For each cluster the total power of the scattered components \mbox{is $2\sigma^2$.}
The dominant component of the $i$th cluster is a complex random variable given by $\xi p_i+j \xi q_i$; thus,
its power is given by $p_i^2+q_i^2$. All the dominant components are subject to the same common shadowing fluctuation, represented
by the power normalized random amplitude $\xi$.

\section{Fundamental Statistics}
\label{statistics}

This section includes the derivation of the PDF, CDF and MGF of the $\kappa$-$\mu$ shadowed distribution.
For the sake of brevity, both the distribution of the signal envelope and the distribution of the signal power (or equivalently the instantaneous signal-to-noise ratio (SNR))
will be named $\kappa$-$\mu$ shadowed; both distributions are connected by a quadratic transformation.

The model represented in (\ref{model}) implies that
the conditional probability of the signal power $W$ given the shadowing amplitude $\xi$ follows a $\kappa$-$\mu$ distribution with PDF \cite{Yacoub07}
\begin{equation}
\label{PDF1}
f_{W\left| \xi  \right.} \left( w;\xi \right) = \frac{1}
{{2\sigma ^2 }}\left( {\frac{w}
{{\xi ^2 d^2 }}} \right)^{\frac{{n  - 1}}
{2}} e^{ - \frac{{w + d^2 }}
{{2\sigma ^2 }}} I_{n  - 1} \left( {\frac{{\xi d}}
{{\sigma ^2 }}\sqrt w } \right),
\end{equation}
where $d^2=\sum_{i=1}^{n} p_i^2+q_i^2$
represents the mean power of the dominant components and $I_{\nu}(\cdot)$ is the modified Bessel function of the
first kind \cite{Gradstein2007}. As noted in
[1], the natural number of \mbox{clusters $n$} can be replaced in
(\ref{PDF1}) by the nonnegative real extension $\mu$, resulting in
a more general and flexible distribution. The $\kappa$ parameter
is then defined as $\kappa=d^2/(2\sigma^2\mu)$ and can be
interpreted, when $\mu$ is a natural number, as the ratio between
the total power of the dominant components and the total power of
the scattered waves. In many practical analyses, the random
variable $\gamma$ representing the instantaneous SNR is used to
model the fading channel; thus, hereinafter we will consider the
random variable $\gamma\triangleq\bar\gamma W/\bar W$, where
$\bar\gamma\triangleq\E[\gamma]$ and $\bar
W=\E[W]=d^2+2\sigma^2\mu$. In terms of the scaled random variable
$\gamma$, the conditional PDF in (\ref{PDF1}) can be rewritten as
\begin{equation}
\label{PDF2}
\begin{gathered}
  f_{\gamma \left| \xi  \right.} \left( {\gamma ;\xi } \right) = \frac{{\mu \left( {1 + \kappa } \right)^{\frac{{\mu  + 1}}
{2}} }}
{{\bar \gamma \kappa ^{\frac{{\mu  - 1}}
{2}} e^{\xi ^2 \mu \kappa } }}\left( {\frac{\gamma }
{{\xi ^2 \bar \gamma }}} \right)^{\frac{{\mu  - 1}}
{2}}e^{ - \frac{{\mu \left( {1 + \kappa } \right)\gamma }}
{{\bar \gamma }}} I_{\mu  - 1} \left( {2\mu \xi \sqrt {\frac{{\kappa \left( {1 + \kappa } \right)\gamma }}
{{\bar \gamma }}} } \right). \hfill \\
\end{gathered}
\end{equation}

Let $\gamma$ be a $\kappa$-$\mu$ shadowed random variable with mean $\bar\gamma$ and real nonnegative shaping parameters $\kappa$, $\mu$ and $m$.
This fact is expressed symbolically as $\gamma\sim{\mathcal S}_{\kappa\mu}(\bar\gamma;\kappa,\mu, m)$.
The PDF of the $\kappa$-$\mu$ shadowed distribution is obtained from (\ref{PDF2}) as follows.

\vspace*{3mm}
\begin{lemma}
\label{lema1}
Let $\gamma\sim{\mathcal S}_{\kappa\mu}(\bar\gamma;\kappa,\mu, m)$, then, its PDF is given by
\begin{equation}
\label{eq_lema1}
\begin{gathered}
  f_\gamma  \left( \gamma  \right) = \frac{{\mu ^\mu  m^m \left( {1 + \kappa } \right)^\mu  }}
{{\Gamma \left( \mu  \right)\bar \gamma \left( {\mu \kappa  + m} \right)^m }} \left( {\frac{\gamma }
{{\bar \gamma }}} \right)^{\mu  - 1} e^{ - \frac{{\mu \left( {1 + \kappa } \right)\gamma }}
{{\bar \gamma }}} {}_1F_1 \left( {m,\mu ;\frac{{\mu ^2 \kappa \left( {1 + \kappa } \right)}}
{{\mu \kappa  + m}}\frac{\gamma }
{{\bar \gamma }}} \right), \hfill \\
\end{gathered}
\end{equation}
where $_1F_1(\cdot)$ is the confluent hypergeometric function \cite{Gradstein2007}.
\end{lemma}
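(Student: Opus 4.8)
The plan is to obtain the unconditional PDF by averaging the conditional PDF in (\ref{PDF2}) over the shadowing amplitude $\xi$. Because $\xi$ is Nakagami-$m$ with $\E[\xi^2]=1$, its PDF is $f_\xi(x)=\frac{2m^m}{\Gamma(m)}x^{2m-1}e^{-mx^2}$ for $x\ge 0$, so by the total probability theorem
\[
f_\gamma(\gamma)=\int_0^\infty f_{\gamma|\xi}(\gamma;x)\,f_\xi(x)\,dx .
\]

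First I would split the conditional PDF (\ref{PDF2}) into the factor that is independent of the integration variable and the factor that depends on it. Writing $c\triangleq\mu\sqrt{\kappa(1+\kappa)\gamma/\bar\gamma}$, the $x$-dependent pieces are $x^{-(\mu-1)}$ (coming from $(\xi^2)^{-(\mu-1)/2}$), the shadowing exponential $e^{-x^2\mu\kappa}$, and the Bessel term $I_{\mu-1}(2cx)$. Multiplying by $f_\xi(x)$ and merging the powers of $x$ with the two Gaussian exponentials collapses the whole $\xi$-average onto the single integral
\[
\int_0^\infty x^{2m-\mu}\,e^{-(\mu\kappa+m)x^2}\,I_{\mu-1}(2cx)\,dx .
\]

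The core step, and the one I expect to be the main obstacle, is evaluating this integral in closed form. My approach is to expand the modified Bessel function in its ascending series $I_{\mu-1}(2cx)=\sum_{k\ge 0}(cx)^{\mu-1+2k}/[k!\,\Gamma(\mu+k)]$ and integrate term by term, each term reducing to the elementary Gaussian moment $\int_0^\infty x^{2m-1+2k}e^{-(\mu\kappa+m)x^2}dx=\tfrac12\Gamma(m+k)(\mu\kappa+m)^{-(m+k)}$. Replacing $\Gamma(m+k)$ and $\Gamma(\mu+k)$ by the Pochhammer symbols $(m)_k$ and $(\mu)_k$ then recognizes the remaining sum as a confluent hypergeometric series, giving
\[
\frac{c^{\mu-1}\,\Gamma(m)}{2\,\Gamma(\mu)}(\mu\kappa+m)^{-m}\,{}_1F_1\!\left(m,\mu;\frac{c^2}{\mu\kappa+m}\right).
\]
The interchange of sum and integral needs no care since every term is nonnegative; alternatively one may simply quote the tabulated integral for $\int_0^\infty x^{s-1}e^{-px^2}I_\nu(qx)\,dx$ from \cite{Gradstein2007}.

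Finally I would reinsert $c^2=\mu^2\kappa(1+\kappa)\gamma/\bar\gamma$ into the hypergeometric argument, which reproduces exactly the factor $\frac{\mu^2\kappa(1+\kappa)}{\mu\kappa+m}\frac{\gamma}{\bar\gamma}$ appearing in (\ref{eq_lema1}), and then collect the surviving prefactors. The bookkeeping is clean: the factors $2$ and $\Gamma(m)$ cancel against those in $f_\xi$ and in (\ref{PDF2}), all fractional powers of $\kappa$ cancel, and the identities $\mu\cdot\mu^{\mu-1}=\mu^\mu$, $(1+\kappa)^{(\mu+1)/2}(1+\kappa)^{(\mu-1)/2}=(1+\kappa)^\mu$, and $(\gamma/\bar\gamma)^{(\mu-1)/2}\cdot(\gamma/\bar\gamma)^{(\mu-1)/2}=(\gamma/\bar\gamma)^{\mu-1}$ assemble the remaining constants into the stated form, completing the proof.
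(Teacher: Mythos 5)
Your proof is correct and follows essentially the same route as the paper: both average the conditional $\kappa$-$\mu$ PDF over the Nakagami-$m$ density $f_\xi(x)=\frac{2m^m}{\Gamma(m)}x^{2m-1}e^{-mx^2}$ and collapse the computation onto the single integral $\int_0^\infty x^{2m-\mu}e^{-(\mu\kappa+m)x^2}I_{\mu-1}(2cx)\,dx$ with $c=\mu\sqrt{\kappa(1+\kappa)\gamma/\bar\gamma}$, then perform the same final bookkeeping. The only divergence is in evaluating that integral --- the paper substitutes $t=\xi^2$ and quotes the tabulated identities (Erd\'elyi et al., eq.\ 4.16.20, and Gradshteyn--Ryzhik, eq.\ 9.220-2), whereas you derive the same closed form from first principles via the ascending Bessel series and termwise Gaussian moments (an interchange justified by nonnegativity, as you note, and you yourself flag the table lookup as the alternative), so the difference is one of presentation rather than of approach.
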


\begin{proof}
See Appendix I.
\end{proof}
\vspace*{3mm}

It can be checked that the PDF in Lemma \ref{lema1} is the Ricean shadowed fading PDF when $\mu=1$. Next, the MGF of the
$\kappa$-$\mu$ shadowed distribution is derived from its PDF.

\vspace{3mm}
\begin{lemma}
\label{lema2}
Let $\gamma\sim{\mathcal S}_{\kappa\mu}(\bar\gamma;\kappa,\mu, m)$; then, its CDF is given by
\begin{equation}
\label{eq_lema2}
\M_\gamma  \left( s \right) \triangleq \E[e^{\gamma s} ] = \frac{{\left( { - \mu } \right)^\mu  m^m \left( {1 + \kappa } \right)^\mu  }}
{{\bar \gamma ^\mu  \left( {\mu \kappa  + m} \right)^m }}\frac{{\left( {s - \frac{{\mu \left( {1 + \kappa } \right)}}
{{\bar \gamma }}} \right)^{m - \mu } }}
{{\left( {s - \frac{{\mu \left( {1 + \kappa } \right)}}
{{\bar \gamma }}\frac{m}
{{\mu \kappa  + m}}} \right)^m }}.
\end{equation}
\end{lemma}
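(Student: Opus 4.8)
The plan is to compute the defining integral $\M_\gamma(s)=\E[e^{\gamma s}]=\int_0^\infty e^{\gamma s} f_\gamma(\gamma)\,d\gamma$ directly, inserting the PDF from Lemma \ref{lema1}. First I would merge the two exponential factors, writing $e^{\gamma s}e^{-\mu(1+\kappa)\gamma/\bar\gamma}=e^{-p\gamma}$ with the effective decay rate $p\triangleq \mu(1+\kappa)/\bar\gamma-s$, so that the MGF collapses to a single integral of the form $\int_0^\infty \gamma^{\mu-1} e^{-p\gamma}\,{}_1F_1(m,\mu;\lambda\gamma)\,d\gamma$, where $\lambda\triangleq \mu^2\kappa(1+\kappa)/[(\mu\kappa+m)\bar\gamma]$ is the argument scaling of the confluent hypergeometric function, multiplied by the constant prefactor of (\ref{eq_lema1}) and a factor $\bar\gamma^{-(\mu-1)}$ coming from the $(\gamma/\bar\gamma)^{\mu-1}$ term.

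The core step is to evaluate this integral with the standard Laplace-type transform $\int_0^\infty t^{b-1}e^{-pt}\,{}_1F_1(a;c;\lambda t)\,dt=\Gamma(b)\,p^{-b}\,{}_2F_1(a,b;c;\lambda/p)$, valid for $\Real(p)>\max(0,\Real(\lambda))$ and $\Real(b)>0$ (see \cite{Gradstein2007}). Here $a=m$ and $b=c=\mu$, and the crucial observation is that, because $b$ and $c$ coincide, the resulting Gauss hypergeometric function reduces to an elementary power through ${}_2F_1(m,\mu;\mu;z)=(1-z)^{-m}$. This single coincidence is what turns the MGF into a product of elementary power functions rather than a genuine hypergeometric object, and I expect it to do most of the work.

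What remains is bookkeeping. Combining the prefactor $\mu^\mu m^m(1+\kappa)^\mu/[\Gamma(\mu)\bar\gamma(\mu\kappa+m)^m]$ with the $\Gamma(\mu)$ and $\bar\gamma^{-(\mu-1)}$ factors cancels the Gamma function and leaves the overall constant $\mu^\mu m^m(1+\kappa)^\mu/[\bar\gamma^\mu(\mu\kappa+m)^m]$. I would then rewrite $p^{-\mu}(1-\lambda/p)^{-m}=p^{m-\mu}(p-\lambda)^{-m}$, substitute $p=-(s-\mu(1+\kappa)/\bar\gamma)$, and, after factoring $\mu(1+\kappa)/\bar\gamma$ out of the difference, simplify $p-\lambda=-(s-\tfrac{m}{\mu\kappa+m}\mu(1+\kappa)/\bar\gamma)$. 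Collecting the sign factors gives $(-1)^{m-\mu}/(-1)^{m}=(-1)^\mu$, and absorbing it into the constant via $(-1)^\mu\mu^\mu=(-\mu)^\mu$ reproduces exactly (\ref{eq_lema2}).

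The main obstacle is one of rigor rather than computation: the transform above converges only for $s$ in the left half-region $\Real(s)<\mu(1+\kappa)/\bar\gamma$ together with the constraint coming from $\lambda$, so strictly one establishes the identity on that domain first and then invokes analytic continuation to conclude that the closed form is valid throughout the common domain of analyticity of both sides. I would flag this continuation explicitly; apart from it, every step is an elementary manipulation of the tabulated integral.
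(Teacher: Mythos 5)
Your proposal is correct and follows essentially the same route as the paper's Appendix II: there the MGF is written as $\M_\gamma(s)=\L[f_\gamma(\gamma);-s]$, the exponential is absorbed by the frequency-shift property (your substitution $p=\mu(1+\kappa)/\bar\gamma-s$), and the tabulated transform \cite[eq.~4.23.1]{Erdelyi1954} of $\gamma^{\mu-1}{}_1F_1(m,\mu;\lambda\gamma)$ is applied, that table entry being precisely your general formula with the degenerate reduction ${}_2F_1(m,\mu;\mu;z)=(1-z)^{-m}$ already built in. The only differences are cosmetic: you make the $b=c$ collapse and the analytic-continuation caveat explicit, whereas the paper leaves both implicit in the cited table entry.
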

\begin{proof}
See Appendix II.
\end{proof}
\vspace{3mm}

The CDF of the $\kappa$-$\mu$ shadowed distribution can be expressed in closed-form by the bivariate confluent hypergeometric function $\Phi_2(\cdot)$, defined in \cite{Gradstein2007}.

\vspace{3mm}
\begin{lemma}
\label{lema3}
Let $\gamma\sim{\mathcal S}_{\kappa\mu}(\bar\gamma;\kappa,\mu, m)$; then, its MGF is given by
\begin{equation}
\label{eq_lema3}
\begin{gathered}
F_\gamma  \left( \gamma  \right) = \frac{{\mu ^{\mu-1}  m^m \left( {1 + \kappa } \right)^\mu  }}
{{\Gamma \left( \mu  \right)\left( {\mu \kappa  + m} \right)^m }}\left( {\frac{1}
{{\bar \gamma }}} \right)^\mu  \gamma ^\mu\times\hfill \\
\hspace{40mm}\Phi _2 \left( {\mu  - m,m,\mu  + 1; - \frac{{\mu \left( {1 + \kappa } \right)\gamma }}
{{\bar \gamma }}, - \frac{{\mu \left( {1 + \kappa } \right)}}
{{\bar \gamma }}\frac{{m\gamma }}
{{\mu \kappa  + m}}} \right).
\end{gathered}
\end{equation}
\end{lemma}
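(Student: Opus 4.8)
The plan is to establish (\ref{eq_lema3}) by obtaining $F_\gamma(\gamma)=\int_0^\gamma f_\gamma(t)\,dt$ directly from the PDF of Lemma~\ref{lema1}. First I would pull the $\gamma$-independent constants out of the integral; absorbing the factor $(t/\bar\gamma)^{\mu-1}$ and the leading $1/\bar\gamma$ into a single prefactor $K=\mu^\mu m^m(1+\kappa)^\mu/[\Gamma(\mu)\bar\gamma^\mu(\mu\kappa+m)^m]$, the problem reduces to evaluating $\int_0^\gamma t^{\mu-1}e^{-at}{}_1F_1(m;\mu;bt)\,dt$, where $a=\mu(1+\kappa)/\bar\gamma$ and $b=\mu^2\kappa(1+\kappa)/[(\mu\kappa+m)\bar\gamma]$. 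A short computation shows $b-a=-\,\mu(1+\kappa)m/[(\mu\kappa+m)\bar\gamma]$, which is exactly the coefficient appearing in the second argument of $\Phi_2$ in the claimed expression, so the target becomes the identity
\[
\int_0^u t^{\mu-1}e^{-at}{}_1F_1(m;\mu;bt)\,dt=\frac{u^\mu}{\mu}\,\Phi_2\!\left(\mu-m,m;\mu+1;-au,(b-a)u\right).
\]

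I would prove this identity by differentiation rather than by direct term-by-term integration, which avoids producing incomplete gamma functions; since $\Phi_2$ is entire in both arguments, its defining series may be differentiated termwise on compact sets. Both sides vanish at $u=0$ (the left as an integral over a degenerate interval, the right because of the $u^\mu$ factor with $\mu>0$), so it suffices to match derivatives. Using $\Phi_2(\beta,\beta';\delta;x,y)=\sum_{j,n}\frac{(\beta)_j(\beta')_n}{(\delta)_{j+n}\,j!\,n!}x^j y^n$ together with the elementary Pochhammer relation $(\mu+j+n)/[\mu\,(\mu+1)_{j+n}]=1/(\mu)_{j+n}$, differentiating the right-hand side collapses the lower parameter from $\mu+1$ to $\mu$ and yields $u^{\mu-1}\,\Phi_2(\mu-m,m;\mu;-au,(b-a)u)$.

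The crux is then the reduction $\Phi_2(\mu-m,m;\mu;-au,(b-a)u)=e^{-au}\,{}_1F_1(m;\mu;bu)$, which turns the bivariate series back into the integrand. Writing $x=-au$ and noting $(b-a)u=bu+x$, this is $\Phi_2(\mu-m,m;\mu;x,z+x)=e^{x}\,{}_1F_1(m;\mu;z)$ with $z=bu$. I would prove it by expanding $(z+x)^n$ binomially inside the $\Phi_2$ series and collecting the coefficient of $z^p x^s$; the key simplification is that $j+n=s+p$ is constant along the inner sum, so the $(\mu)_{j+n}$ factor comes out and the remaining finite sum is the Vandermonde convolution $\sum_{i}\binom{s}{i}(m+p)_i(\mu-m)_{s-i}=(\mu+p)_s$. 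Matching against the coefficient $(m)_p/[(\mu)_p\,p!\,s!]$ of $e^x{}_1F_1(m;\mu;z)$ reduces, after using $(\mu)_{s+p}=(\mu)_p(\mu+p)_s$, to an identity, which closes the argument. This Humbert reduction is the main obstacle; everything else is bookkeeping.

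Finally I would substitute back, so that $K\cdot u^\mu/\mu=\mu^{\mu-1}m^m(1+\kappa)^\mu\gamma^\mu/[\Gamma(\mu)\bar\gamma^\mu(\mu\kappa+m)^m]$ and the arguments of $\Phi_2$ become $-\mu(1+\kappa)\gamma/\bar\gamma$ and $-[\mu(1+\kappa)/\bar\gamma][m\gamma/(\mu\kappa+m)]$, recovering (\ref{eq_lema3}). The interchange of summation with integration and differentiation is justified throughout by the entirety of ${}_1F_1$ and $\Phi_2$ and their uniform convergence on compact $\gamma$-intervals.
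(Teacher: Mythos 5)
Your proposal is correct, but it takes a genuinely different route from the paper's. The paper proves this lemma in a single line: it writes $F_\gamma(\gamma)=\mathcal{L}^{-1}\left[\mathcal{M}_\gamma(-s)/s;\gamma\right]$, inserts the rational form of the MGF from (\ref{appex12}), and cites the tabulated Laplace pair \cite[eq. 4.24.3]{Erdelyi1954}, which directly produces the $\Phi_2$ expression. You instead integrate the PDF of Lemma \ref{lema1} in the $\gamma$-domain, and you handle the crux — the antiderivative identity $\int_0^u t^{\mu-1}e^{-at}\,{}_1F_1(m;\mu;bt)\,dt=(u^\mu/\mu)\,\Phi_2(\mu-m,m;\mu+1;-au,(b-a)u)$ — by a derivative-matching argument that hinges on the Humbert-type reduction $\Phi_2(\mu-m,m;\mu;x,z+x)=e^{x}\,{}_1F_1(m;\mu;z)$, which you prove from scratch via binomial expansion and Chu--Vandermonde. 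I verified the key steps: the lowering of the denominator parameter under differentiation via $(\mu+j+n)/\left[\mu\,(\mu+1)_{j+n}\right]=1/(\mu)_{j+n}$ is right; the coefficient extraction indeed gives $(m)_p/\left[(\mu)_p\,p!\,s!\right]$ on both sides after $(\mu)_{s+p}=(\mu)_p(\mu+p)_s$; $b-a=-\mu(1+\kappa)m/\left[(\mu\kappa+m)\bar\gamma\right]$ matches the second argument of $\Phi_2$; and the prefactor bookkeeping recovers (\ref{eq_lema3}) exactly (which, despite the lemma's wording ``MGF,'' is the CDF — the headers of Lemmas \ref{lema2} and \ref{lema3} are swapped in the paper, and you correctly proved the CDF formula). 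The paper's approach buys brevity and reuse of the already-derived MGF, outsourcing the analytic content to a transform table; your approach buys self-containedness — since integration from $0$ to $u$ is division by $s$ in the transform domain, you have in effect reproved the cited Erd\'elyi pair in the relevant special case where the upper parameters of $\Phi_2$ sum to the shifted lower parameter. The only hypothesis worth stating explicitly is $\mu>0$, needed both for the vanishing of $u^\mu$ at the origin and for integrability of $t^{\mu-1}$ there; it holds since $\mu$ is a positive shape parameter, and the Vandermonde convolution is a polynomial identity valid for arbitrary real parameters, so the cases $m\ge\mu$ (negative first parameter $\mu-m$) cause no trouble.
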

\begin{proof}
Taking into account that $F_{\gamma}(\gamma)=\L^{-1}[\M_{\gamma}(-s)/s;\gamma]$, the result follows from (\ref{appex12}) and the Laplace transform pair given in \cite[eq. 4.24.3]{Erdelyi1954}.
\end{proof}
\vspace{3mm}

The fundamental statistics presented in Lemmas \ref{lema1}, 2 and \ref{lema3} provide a unification of a variety of important fading distributions.
Table I reflects the parameter specializations which allow us
to obtain the one-side Gaussian, Rayleigh, Nakagami-$m$, Ricean,
$\kappa$-$\mu$ and Ricean shadowed fading
distributions, from the three shaping parameters of the $\kappa$-$\mu$ shadowed distribution.
The PDF given in Lemma \ref{lema1} is plotted in fig. \ref{fig4} for different parameter combinations;
it is clearly shown the flexibility of the mathematical model represented by the expression (\ref{eq_lema1}).

\section{Sum and Maximum Distributions}
\label{advancedstatistics}

In this Section the distribution of the sum and the maximum of independent
non-identically distributed (i.n.d) $\kappa$-$\mu$ shadowed random variables are derived. These results indicate that this new
distribution has good analytical properties, and has great potential as a tool for modelling and analyzing a variety of wireless communication systems.

\subsection{Sum distribution}
\label{sum}

The sum distribution of random variables representing the SNRs in a fading channel plays a prominent role in the analysis of diversity systems and
space-time coding. In the next Proposition, the sum of independent $\kappa$-$\mu$ shadowed random variables is statistically characterized.

\vspace{3mm}
\begin{proposition}
\label{prop1} Let $\gamma_k\sim{\mathcal
S}_{\kappa\mu}(\bar\gamma_k;\kappa_k,\mu_k, m_k)$ for $k=1,...,M$,
where all the random variables are arbitrarily distributed and
mutually independent. The PDF of the sum $\gamma=\sum_{k=1}^{M}\gamma_k$
is given by

\vspace{3mm}
\begin{equation}
\label{eq_prop1}
\begin{gathered}
  f_\gamma  \left( \gamma  \right) = \left( {\frac{1}
{{\Gamma \left( {\sum\limits_{k = 1}^M {\mu _k } } \right)}}\prod\limits_{k = 1}^M {\frac{{\mu _k ^{\mu _k } m_k ^{m_k } \left( {1 + \kappa _k } \right)^{\mu _k } }}
{{\left( {\mu _k \kappa _k  + m_k } \right)^{m_k } }}\left( {\frac{1}
{{\bar \gamma _k }}} \right)^{\mu _k } } } \right)\gamma ^{\sum\limits_{k = 1}^M {\mu _k }  - 1}  \times  \hfill \\
  \Phi _2^{(2M)} \left( {\mu _1  - m_1 , \ldots ,\mu _M  - m_{M} ,m_1 , \ldots ,m_M ;\sum\limits_{k = 1}^M {\mu _k } ;} \right.\frac{{ - \mu _1 \left( {1 + \kappa _1 } \right)\gamma }}
{{\bar \gamma _1 }}, \ldots ,\frac{{ - \mu _M \left( {1 + \kappa _M } \right)\gamma }}
{{\bar \gamma _M }}, \hfill \\
  \hspace{55mm} \frac{{ - \mu _1 \left( {1 + \kappa _1 } \right)}}
{{\bar \gamma _1 }}\frac{{m_1 \gamma }}
{{\mu _1 \kappa _1  + m_1 }}, \ldots ,\frac{{ - \mu _M \left( {1 + \kappa _M } \right)}}
{{\bar \gamma _M }}\left. {\frac{{m_M \gamma }}
{{\mu _M \kappa _M  + m_M }}} \right). \hfill \\
\end{gathered}
\end{equation}
where $\Phi_2^{(N)}(\cdot)$ is the confluent multivariate hypergeometric \mbox{function \cite{Srivastava1985}.} The CDF of $\gamma$ is given by
\begin{equation}
\label{eq_prop2}
\begin{gathered}
  F_\gamma  \left( \gamma  \right) = \left( {\frac{1}
{{\Gamma \left( {1 + \sum\limits_{k = 1}^M {\mu _k } } \right)}}\prod\limits_{k = 1}^M {\frac{{\mu _k ^{\mu _k } m_k ^{m_k } \left( {1 + \kappa _k } \right)^{\mu _k } }}
{{\left( {\mu _k \kappa _k  + m_k } \right)^{m_k } }}\left( {\frac{1}
{{\bar \gamma _k }}} \right)^{\mu _k } } } \right)\gamma ^{\sum\limits_{k = 1}^M {\mu _k } }  \times  \hfill \\
  \Phi _2^{(2M)} \left( {\mu _1  - m_1 , \ldots ,\mu _M  - m_{M} ,m_1 , \ldots ,m_M ;1 + \sum\limits_{k = 1}^M {\mu _k } ;} \right.\frac{{ - \mu _1 \left( {1 + \kappa _1 } \right)\gamma }}
{{\bar \gamma _1 }}, \ldots ,\frac{{ - \mu _M \left( {1 + \kappa _M } \right)\gamma }}
{{\bar \gamma _M }}, \hfill \\
  \hspace{55mm}\frac{{ - \mu _1 \left( {1 + \kappa _1 } \right)}}
{{\bar \gamma _1 }}\frac{{m_1 \gamma }}
{{\mu _1 \kappa _1  + m_1 }}, \ldots ,\frac{{ - \mu _M \left( {1 + \kappa _M } \right)}}
{{\bar \gamma _M }}\left. {\frac{{m_M \gamma }}
{{\mu _M \kappa _M  + m_M }}} \right). \hfill \\
\end{gathered}
\end{equation}
\end{proposition}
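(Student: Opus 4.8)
The plan is to prove the sum distribution by building on the MGF result in Lemma~\ref{lema2}, exploiting the fact that independence converts a sum into a product of MGFs. First I would write the MGF of the sum as $\M_\gamma(s)=\prod_{k=1}^{M}\M_{\gamma_k}(s)$, substituting the closed-form expression \eqref{eq_lema2} for each factor. This yields a product of terms of the form $\bigl(s-a_k\bigr)^{m_k-\mu_k}\bigl(s-b_k\bigr)^{-m_k}$, where $a_k=\mu_k(1+\kappa_k)/\bar\gamma_k$ and $b_k=a_k\,m_k/(\mu_k\kappa_k+m_k)$, multiplied by the product of the constant prefactors. The total structure is thus a product of $2M$ factors each raised to a (possibly negative) power, times $(-1)^{\sum\mu_k}$ and the constant.

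Next I would recover the PDF by inverse Laplace transformation, $f_\gamma(\gamma)=\L^{-1}[\M_\gamma(s);\gamma]$ after the appropriate sign convention (note $\M_\gamma(s)=\E[e^{\gamma s}]$, so the inversion is against $e^{\gamma s}$). The key is to recognize that the inverse Laplace transform of a product $\prod_{j=1}^{2M}(s-c_j)^{-\rho_j}$ is precisely the integral representation that defines the confluent multivariate (Humbert-type) hypergeometric function $\Phi_2^{(N)}$ with $N=2M$. Concretely, the Laplace-transform pair that expresses $\Phi_2^{(N)}$ as the inverse transform of such a product of linear factors is the multivariate generalization of the single-factor pair \cite[eq.~4.24.3]{Erdelyi1954} already used in Lemma~\ref{lema3}; I would cite the corresponding result from \cite{Srivastava1985}. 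Matching the exponents $\mu_k-m_k$ and $-m_k$ against the first $2M$ numerator parameters of $\Phi_2^{(2M)}$, the denominator parameter against $\sum_{k=1}^M\mu_k$, and the shifts $c_j$ against the arguments $-a_k\gamma$ and $-b_k\gamma$, directly produces \eqref{eq_prop1}, with the $\gamma^{\sum\mu_k-1}$ factor and the Gamma normalization emerging from the transform pair.

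For the CDF I would use the standard relation $F_\gamma(\gamma)=\L^{-1}[\M_\gamma(-s)/s;\gamma]$, exactly as in the proof of Lemma~\ref{lema3}. Dividing the product MGF by $s$ adds one more linear factor $s^{-1}$ (equivalently shifts one argument to the origin and raises the dimension of the hypergeometric representation), which increments the denominator parameter from $\sum_{k=1}^M\mu_k$ to $1+\sum_{k=1}^M\mu_k$ and raises the power of $\gamma$ by one, giving \eqref{eq_prop2}. Both results follow from the same multivariate Laplace pair applied to the CDF integrand.

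The main obstacle I expect is establishing rigorously the multivariate Laplace-transform identity that realizes $\Phi_2^{(2M)}$ as the inverse transform of the product of $2M$ linear factors, since the single-variable pair in \cite{Erdelyi1954} does not directly cover the multivariate case. I would address this by invoking the Mellin--Barnes or integral representation of $\Phi_2^{(N)}$ in \cite{Srivastava1985} and verifying by induction on $M$ that convolving one more factor of the Lemma~\ref{lema2} form raises the number of arguments by two while preserving the $\Phi_2$ structure; the base case $M=1$ reduces to Lemma~\ref{lema3}. The bookkeeping of exponents and the constant prefactor, though tedious, is routine once the transform pair is in hand.
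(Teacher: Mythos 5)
Your proposal follows essentially the same route as the paper's proof in Appendix III: factor the MGF of the sum by independence using Lemma \ref{lema2}, then recover the PDF via $\L^{-1}[\M_\gamma(-s);\gamma]$ and the CDF via $\L^{-1}[\M_\gamma(-s)/s;\gamma]$, identifying the resulting product of $2M$ shifted linear factors with the one-dimensional Laplace transform of $\Phi_2^{(2M)}$. The multivariate transform pair whose rigor you flag as the main obstacle is exactly the cited identity \cite[pp.~290, eq.~55]{Srivastava1985}, so no separate induction or Mellin--Barnes argument is needed.
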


\begin{proof}
See Appendix III.
\end{proof}
\vspace{3mm}

Once the following
technical Lemma is considered, the important independent and identically distributed (i.i.d) case for the sum distribution is obtained as a
Corollary from the previous Proposition.

\vspace{3mm}
\begin{lemma}
\label{lemma3} The confluent multivariate hypergeometric function $\Phi_2$ has the following property
\begin{equation}
\label{eq_lemma31}
\Phi _2^{(N + M)} \left( {\underbrace {\beta _1 , \ldots ,\beta _1 }_N,\underbrace {\beta _2 , \ldots ,\beta _2 }_M;\nu ;\underbrace {x_1 , \ldots ,x_1 }_N,\underbrace {x_2 , \ldots ,x_2 }_M} \right) = \Phi _2 \left( {\beta _1 N,\beta _2 M;\nu ;x_1 ,x_2 } \right),
\end{equation}
\end{lemma}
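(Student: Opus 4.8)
The plan is to argue directly from the multiple-series definition of the confluent multivariate hypergeometric function. Writing $(a)_k$ for the Pochhammer symbol, one has
\[
\Phi_2^{(N+M)}\!\left(b_1,\ldots,b_{N+M};\nu;z_1,\ldots,z_{N+M}\right)=\sum_{j_1,\ldots,j_{N+M}\ge 0}\frac{\prod_i (b_i)_{j_i}}{(\nu)_{j_1+\cdots+j_{N+M}}}\prod_i\frac{z_i^{j_i}}{j_i!}.
\]
Specializing the first $N$ parameters to $\beta_1$ and their arguments to $x_1$, and the remaining $M$ parameters to $\beta_2$ and their arguments to $x_2$, the summand depends on the multi-index only through the two block sums $J=j_1+\cdots+j_N$ and $K=j_{N+1}+\cdots+j_{N+M}$.

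First I would reorganize the $(N+M)$-fold series by collecting all terms that share a fixed pair $(J,K)$; absolute convergence of the defining series (these are confluent, hence entire-type, hypergeometric series) justifies the rearrangement by Fubini. The resulting coefficient factorizes as
\[
\frac{x_1^{J}x_2^{K}}{(\nu)_{J+K}}\left(\sum_{j_1+\cdots+j_N=J}\prod_{i=1}^{N}\frac{(\beta_1)_{j_i}}{j_i!}\right)\left(\sum_{k_1+\cdots+k_M=K}\prod_{i=1}^{M}\frac{(\beta_2)_{k_i}}{k_i!}\right).
\]

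The central step is to evaluate the two inner combinatorial sums. I would use the elementary generating identity $\sum_{j\ge 0}(\beta)_j\,t^j/j!=(1-t)^{-\beta}$; raising it to the $N$-th power gives $(1-t)^{-\beta_1 N}=\sum_{J\ge 0}(\beta_1 N)_J\,t^J/J!$, and matching the coefficient of $t^J$ on both sides (a Cauchy product) yields
\[
\sum_{j_1+\cdots+j_N=J}\prod_{i=1}^{N}\frac{(\beta_1)_{j_i}}{j_i!}=\frac{(\beta_1 N)_J}{J!},
\]
with the analogous identity for the $\beta_2$ block. Substituting these back collapses the double sum into
\[
\sum_{J,K\ge 0}\frac{(\beta_1 N)_J\,(\beta_2 M)_K}{(\nu)_{J+K}}\frac{x_1^{J}}{J!}\frac{x_2^{K}}{K!}=\Phi_2\!\left(\beta_1 N,\beta_2 M;\nu;x_1,x_2\right),
\]
which is the claimed identity.

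I expect the only subtle point to be the justification of the term rearrangement; this should be routine, since the series defining $\Phi_2^{(N+M)}$ converges absolutely, so the summation may be carried out in any order. The generating-function evaluation of the inner sums is equivalent to the Chu--Vandermonde convolution and imposes no restriction beyond the (real, nonnegative) parameters already assumed.
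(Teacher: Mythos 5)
Your proof is correct, but it takes a genuinely different route from the paper's. The paper never touches the series definition: it forms the ancillary function $\Lambda(t)=t^{\nu-1}\Phi_2^{(N+M)}(\beta_1,\ldots,\beta_2;\nu;x_1t,\ldots,x_2t)$ (in the paper's Appendix IV both blocks are misprinted as $\beta_2$), applies the known one-dimensional Laplace-transform pair $\mathcal{L}\left[t^{\nu-1}\Phi_2^{(n)}(b_1,\ldots,b_n;\nu;x_1t,\ldots,x_nt);s\right]=\Gamma(\nu)s^{-\nu}\prod_{i=1}^{n}(1-x_i/s)^{-b_i}$ from Srivastava--Karlsson (p.~290, eq.~55), notes that in the transformed domain the $N$ identical factors and the $M$ identical factors trivially collapse to $(1-x_1/s)^{-\beta_1 N}(1-x_2/s)^{-\beta_2 M}$, and then inverts with the same pair and sets $t=1$. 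You instead work directly with the defining multiple series, group terms by the block sums $(J,K)$, and evaluate the inner sums over compositions via the Vandermonde convolution $\sum_{j_1+\cdots+j_N=J}\prod_{i}(\beta_1)_{j_i}/j_i!=(\beta_1 N)_J/J!$; this is rigorous because the confluent series $\Phi_2^{(n)}$ is entire in its arguments (the denominator $(\nu)_{j_1+\cdots+j_n}$ dominates the product of factorials), so the rearrangement is licensed by absolute convergence, and the convolution identity is a finite polynomial identity in $\beta_1$, so no convergence issue arises from using $(1-t)^{-\beta}$ as a generating function. One sentence of yours is loosely phrased --- the summand does \emph{not} depend on the multi-index only through $(J,K)$, since $\prod_i(\beta_1)_{j_i}/j_i!$ depends on the individual indices --- but the displayed factorization that follows is exactly right and is what the argument actually uses. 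Your route buys generality and self-containedness: the identity holds for all complex $x_1,x_2$ and any $\nu\neq 0,-1,-2,\ldots$, with no need for the hypotheses $\Re[\nu]>0$, $\Re[x_1]<0$, $\Re[x_2]<0$ stated in the Lemma, which exist only to make the paper's Laplace integral converge. The paper's route buys brevity and economy within its own toolkit, since the very same transform pair is already the workhorse of the proofs of Proposition~1 and Lemma~3.
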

\vspace{1mm}
where $N$ and $M$ are natural numbers, $\Re[\nu]>0$, $\Re[x_1]<0$ and $\Re[x_2]<0$.

\begin{proof}
See Appendix IV.
\end{proof}
\vspace{3mm}

\begin{corollary}
\label{cor1} Let $\gamma_k\sim{\mathcal
S}_{\kappa\mu}(\bar\gamma;\kappa,\mu, m)$ for $k=1,...,M$,
i.e. all the random variables are identically distributed and
mutually independent. The PDF of the sum $\gamma=\sum_{k=1}^{M}\gamma_k$
is given by
\begin{equation}
\label{eq_cor1}
\begin{gathered}
  f_\gamma  \left( \gamma  \right) = \frac{1}
{{\Gamma \left( {M\mu } \right)}}\frac{{\mu ^{\mu M} m^{mM} \left( {1 + \kappa } \right)^{\mu M} }}
{{\left( {\mu \kappa  + m} \right)^{mM} }}\left( {\frac{1}
{{\bar \gamma }}} \right)^{\mu M} \gamma ^{M\mu  - 1}  \times  \hfill \\
  \hspace{30mm}\Phi _2 \left( {\mu M - mM,mM;\mu M;\frac{{ - \mu \left( {1 + \kappa } \right)\gamma }}
{{\bar \gamma }},\frac{{ - \mu \left( {1 + \kappa } \right)}}
{{\bar \gamma }}\frac{{m\gamma }}
{{\mu \kappa  + m}}} \right). \hfill \\
\end{gathered}
\end{equation}
The CDF of $\gamma$ is given by
\begin{equation}
\label{eq_cor2}
\begin{gathered}
  F_\gamma  \left( \gamma  \right) = \frac{1}
{{\Gamma \left( {1 + M\mu } \right)}}\frac{{\mu ^{\mu M} m^{mM} \left( {1 + \kappa } \right)^{\mu M} }}
{{\left( {\mu \kappa  + m} \right)^{mM} }}\left( {\frac{1}
{{\bar \gamma }}} \right)^{\mu M} \gamma ^{M\mu }  \times  \hfill \\
  \hspace{30mm}\Phi _2 \left( {\mu M - mM,mM;1 + \mu M;\frac{{ - \mu \left( {1 + \kappa } \right)\gamma }}
{{\bar \gamma }},\frac{{ - \mu \left( {1 + \kappa } \right)}}
{{\bar \gamma }}\frac{{m\gamma }}
{{\mu \kappa  + m}}} \right). \hfill \\
\end{gathered}
\end{equation}
\end{corollary}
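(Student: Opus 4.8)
The plan is to obtain the Corollary as a direct specialization of Proposition~\ref{prop1}, with the collapse of the multivariate hypergeometric function handled by Lemma~\ref{lemma3}. First I would set all the shaping parameters equal in the i.n.d.\ expressions~(\ref{eq_prop1}) and~(\ref{eq_prop2}), i.e.\ take $\bar\gamma_k=\bar\gamma$, $\kappa_k=\kappa$, $\mu_k=\mu$ and $m_k=m$ for every $k=1,\ldots,M$. Under this substitution the product over $k$ in the leading constant becomes the $M$th power of a single factor, so the prefactor reduces to $\frac{1}{\Gamma(M\mu)}\frac{\mu^{\mu M}m^{mM}(1+\kappa)^{\mu M}}{(\mu\kappa+m)^{mM}}\bar\gamma^{-\mu M}$ for the PDF (and the analogous constant with $\Gamma(1+M\mu)$ for the CDF), while the power of $\gamma$ becomes $M\mu-1$ (respectively $M\mu$), matching the prefactors in~(\ref{eq_cor1}) and~(\ref{eq_cor2}).

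Next I would examine the arguments of $\Phi_2^{(2M)}$. Because the parameters coincide, the first block of $M$ parameters $\mu_k-m_k$ all equal $\mu-m$, the second block of $M$ parameters $m_k$ all equal $m$, the denominator parameter $\sum_{k=1}^M\mu_k$ equals $M\mu$ (or $1+M\mu$ in the CDF), and the two blocks of $M$ variables each reduce to a single repeated value, namely $x_1=-\mu(1+\kappa)\gamma/\bar\gamma$ and $x_2=-\tfrac{\mu(1+\kappa)}{\bar\gamma}\tfrac{m\gamma}{\mu\kappa+m}$. Hence $\Phi_2^{(2M)}$ is evaluated exactly at the repeated-argument configuration treated in Lemma~\ref{lemma3}, with both blocks of length $N=M$, and with $\beta_1=\mu-m$, $\beta_2=m$.

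Finally I would invoke Lemma~\ref{lemma3} to contract the $2M$-variable function to the bivariate $\Phi_2$. The hypotheses hold on the support of the distribution: for $\gamma>0$ and nonnegative shaping parameters one has $\Re[\nu]=M\mu>0$ (and $1+M\mu>0$), $\Re[x_1]<0$ and $\Re[x_2]<0$. Applying the identity yields $\Phi_2\bigl((\mu-m)M,mM;M\mu;x_1,x_2\bigr)$, precisely the bivariate argument appearing in~(\ref{eq_cor1}) and~(\ref{eq_cor2}). Since the derivation is a mechanical specialization, the only point requiring care is verifying that the block structure of the repeated arguments matches Lemma~\ref{lemma3} exactly and that the convergence conditions are satisfied over the relevant range of $\gamma$; there is no substantive analytic obstacle beyond this bookkeeping.
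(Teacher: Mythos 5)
Your proposal is correct and follows essentially the same route as the paper, which likewise obtains the Corollary by specializing Proposition~\ref{prop1} to identical parameters and contracting $\Phi_2^{(2M)}$ via Lemma~\ref{lemma3} with $\beta_1=\mu-m$, $\beta_2=m$ and both blocks of length $M$. Your explicit verification of the prefactor, the block structure, and the convergence conditions $\Re[\nu]>0$, $\Re[x_1]<0$, $\Re[x_2]<0$ on $\gamma>0$ simply spells out the bookkeeping the paper leaves implicit.
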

\vspace{1mm}
\begin{proof}
This result is a direct consequence of Proposition \ref{prop1} and Lemma \ref{lemma3}.
\end{proof}
\vspace{3mm}

The asymptotic behavior of the PDF and CDF of the $\kappa$-$\mu$ shadowed distribution is summarized in the following result.

\vspace{3mm}
\begin{corollary}
\label{cor2} Let $\gamma_k\sim{\mathcal
S}_{\kappa\mu}(\bar\gamma_k;\kappa_k,\mu_k, m_k)$ for $k=1,...,M$,
where all the random variables are arbitrarily distributed and
mutually independent. The asymptotic behavior of the PDF of the sum $\gamma=\sum_{k=1}^{M}\gamma_k$ when $\bar\gamma_k\rightarrow \infty$ for all $k$ is given by
\begin{equation}
\label{eqcor21}
f_\gamma  \left( \gamma  \right) \sim \left( {\frac{1}
{{\Gamma \left( {\sum\limits_{k = 1}^M {\mu _k } } \right)}}\prod\limits_{k = 1}^M {\frac{{\mu _k ^{\mu _k } m_k ^{m_k } \left( {1 + \kappa _k } \right)^{\mu _k } }}
{{\left( {\mu _k \kappa _k  + m_k } \right)^{m_k } }}\left( {\frac{1}
{{\bar \gamma _k }}} \right)^{\mu _k } } } \right)\gamma ^{\sum\limits_{k = 1}^M {\mu _k }  - 1},
\end{equation}
and the asymptotic behavior of the CDF is given by
\begin{equation}
\label{eqcor22}
F_\gamma  \left( \gamma  \right) \sim \left( {\frac{1}
{{\Gamma \left( {1 + \sum\limits_{k = 1}^M {\mu _k } } \right)}}\prod\limits_{k = 1}^M {\frac{{\mu _k ^{\mu _k } m_k ^{m_k } \left( {1 + \kappa _k } \right)^{\mu _k } }}
{{\left( {\mu _k \kappa _k  + m_k } \right)^{m_k } }}\left( {\frac{1}
{{\bar \gamma _k }}} \right)^{\mu _k } } } \right)\gamma ^{\sum\limits_{k = 1}^M {\mu _k } }.
\end{equation}
\end{corollary}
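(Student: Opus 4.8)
The plan is to start directly from the exact expressions for the sum distribution established in Proposition \ref{prop1}, namely the PDF (\ref{eq_prop1}) and the CDF (\ref{eq_prop2}), and to extract their leading-order behaviour in the high-SNR regime $\bar\gamma_k\rightarrow\infty$. The first observation is that the algebraic prefactor together with the monomial $\gamma^{\sum_k\mu_k-1}$ appearing in (\ref{eq_prop1}) already coincide \emph{exactly} with the right-hand side of the claimed asymptotic relation (\ref{eqcor21}), and likewise the prefactor times $\gamma^{\sum_k\mu_k}$ in (\ref{eq_prop2}) coincides with the right-hand side of (\ref{eqcor22}). Therefore the entire content of the corollary reduces to showing that the multivariate confluent hypergeometric factor $\Phi_2^{(2M)}(\cdots)$ tends to unity in the stated limit.

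To establish this, I would invoke the series representation of the confluent multivariate hypergeometric function $\Phi_2^{(N)}$: it is a multiple power series in its $N$ arguments whose term of multi-index zero equals $1$, so that $\Phi_2^{(N)}(\beta_1,\ldots,\beta_N;\nu;0,\ldots,0)=1$ and the function is continuous at the origin within its region of convergence. The crucial structural fact is that every one of the $2M$ arguments of $\Phi_2^{(2M)}$ in (\ref{eq_prop1}) and (\ref{eq_prop2}) is proportional to $1/\bar\gamma_k$ for some $k$, with $\gamma$ and all shaping parameters held fixed. Consequently, as $\bar\gamma_k\rightarrow\infty$ for all $k$, each argument tends to $0$, and by continuity at the origin $\Phi_2^{(2M)}(\cdots)\rightarrow 1$. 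Since the ratio of the exact PDF (\ref{eq_prop1}) to the right-hand side of (\ref{eqcor21}) equals precisely this hypergeometric factor, the asserted asymptotic equivalence (\ref{eqcor21}) follows at once, and the identical argument applied to (\ref{eq_prop2}) yields (\ref{eqcor22}).

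The main obstacle I anticipate is one of rigour rather than of computation: one must confirm that the evaluation point remains inside the domain of convergence of the defining series as the arguments approach the origin, so that passage to the limit (equivalently, continuity at the origin) is legitimate. For any fixed finite $\gamma$ this is immediate, since all arguments eventually lie in a shrinking neighbourhood of $0$ where the series converges absolutely; moreover the sign conditions $\Re[x_j]<0$ employed elsewhere in the paper (cf. the convergence hypotheses accompanying Lemma \ref{lemma3}) hold throughout the approach to the limit, so no boundary difficulty arises. Hence the interchange is valid and the leading-order forms (\ref{eqcor21}) and (\ref{eqcor22}) are rigorously justified.
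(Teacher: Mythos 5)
Your proposal is correct and takes essentially the same route as the paper: there, Corollary \ref{cor2} is stated to be a direct consequence of Proposition \ref{prop1} together with the fact that $\Phi_2^{(N)}(\beta_1,\ldots,\beta_N;\nu;0,\ldots,0)=1$, which is exactly the reduction you carry out (prefactors match the claimed asymptotes, every argument of $\Phi_2^{(2M)}$ scales as $1/\bar\gamma_k$, hence the hypergeometric factor tends to unity). Your extra care about the domain of convergence is sound but unnecessary, since $\Phi_2^{(N)}$, being confluent in all its variables, is entire, so continuity at the origin holds automatically; this merely makes explicit what the paper treats as trivial.
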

\vspace{1mm}
\begin{proof}
This result is a direct consequence of Proposition \ref{prop1} and the following trivial fact $\Phi_2^{(N)}(\beta_1,...,\beta_N;\nu;0,...,0)=1$.
\end{proof}
\vspace{3mm}

\subsection{Maximum distribution}
\label{max}

The statistical characterization of the maximum of $M$ independent $\kappa$-$\mu$ shadowed random variables is straightforward from the previous results. In general,
the CDF and the PDF for such maximum are respectively given by
\begin{equation}
\label{eq_max1}
\left\{ \begin{gathered}
  F_{\max \left\{ {\gamma _k } \right\}} \left( \gamma  \right) = \prod\limits_{k = 1}^M {F_{\gamma _k } \left( \gamma  \right)} , \hfill \\
  f_{\max \left\{ {\gamma _k } \right\}} \left( \gamma  \right) = \frac{d}
{{d\gamma }}\prod\limits_{k = 1}^M {F_{\gamma _k } \left( \gamma  \right)}  = \sum\limits_{k = 1}^M {f_{\gamma _k } \left( \gamma  \right)\prod\limits_{r = 1,r \ne k}^M {F_{\gamma _k } \left( \gamma  \right)} },  \hfill \\
\end{gathered}  \right.
\end{equation}
where $f_{\gamma _k }$ and $F_{\gamma _k }$ for $k=1,...,M$ are the corresponding marginal PDFs and CDFs. Substitution of the expressions for such marginal distributions
derived in Section \ref{statistics} in (\ref{eq_max1}) provides closed-form expressions for the PDF and CDF of the maximum of independent $\kappa$-$\mu$ shadowed random variables.

\section{Performance Analysis of Wireless Communication Systems}
\label{numerical}

This Section shows that the $\kappa$-$\mu$ shadowed distribution is an useful tool for modelling and analyzing wireless communication systems.

In previous Sections it was proved
that the $\kappa$-$\mu$ shadowed fading model is a natural generalization of the $\kappa$-$\mu$ model and unifies a variety of popular fading models. Since the $\kappa$-$\mu$
shadowed fading model has an additional parameter $m$ with respect to the $\kappa$-$\mu$ model which is physically related to shadowing;
the fitting of experimental data to the $\kappa$-$\mu$ shadowed model must be as least as good as the fitting to the $\kappa$-$\mu$ model. Otherwise, the same statement is applicable to
the Ricean shadowed model due to the $\kappa$-$\mu$ shadowed model has an extra shaping parameter $\mu$ with respect to the Ricean shadowed model.
Both the $\kappa$-$\mu$ model and the Ricean shadowed model have been proved very useful to model fading scenarios as diverse as mobile radio communications,
land mobile satellite communications and underwater acoustic communications \cite{Yacoub07}-\cite{Paris12}; thus, the $\kappa$-$\mu$ shadowed model which
encompasses these two models represents a very general tool to characterize fading channels.

With regard to the utility of the $\kappa$-$\mu$ shadowed model for the analysis of wireless communication systems, we will show below that the closed-form statistics derived in previous Sections
allows us to obtain closed-form expressions for certain fundamental performance metrics. In particular, the outage probability and/or the error
probability for $\kappa$-$\mu$ shadowed fading channels will be obtained when the receiver performs maximal ratio combining (MRC) or selection combining (SC).
These new expressions generalize all the results found in the literature for the $\kappa$-$\mu$ fading distribution and the Ricean shadowed distribution, and all the
fading distributions encompassed by these two models.

\subsection{Selection combining with $\kappa$-$\mu$ shadowed fading}
\label{numericalSC}

Let us consider a receiver with $L$ branches which performs SC. Each branch experiences $\kappa$-$\mu$ shadowed fading with an instantaneous SNR $\gamma_k\sim{\mathcal
S}_{\kappa\mu}(\bar\gamma_k;\kappa_k,\mu_k, m_k)$ for $k=1,...,L$. It is assumed that all the random variables $\gamma_k$ are mutually independent. Then, using (\ref{eq_max1}) and
\mbox{Lemma \ref{lema3},} the outage probability for SC is given by
\begin{equation}
\label{eq_SC1}
\begin{gathered}
  P_o  = \Pr \left\{ {\gamma _{SC}  \leqslant \eta } \right\} = \prod\limits_{k = 1}^L {\frac{{\mu _k ^{\mu _k } m_k ^{m_k } \left( {1 + \kappa _k } \right)^{\mu _k } }}
{{\Gamma \left( {\mu _k } \right)\left( {\mu _k \kappa _k  + m_k } \right)^{m_k } }}\left( {\frac{1}
{{\bar \gamma _k }}} \right)^{\mu _k } \eta ^{\mu _k }  \times }  \hfill \\
  \hspace{30mm}\Phi _2 \left( {\mu _k  - m_k ,m_k ,\mu _k  + 1; - \frac{{\mu _k \left( {1 + \kappa _k } \right)\eta }}
{{\bar \gamma _k }}, - \frac{{\mu _k \left( {1 + \kappa _k } \right)}}
{{\bar \gamma _k }}\frac{{m_k \eta }}
{{\mu _k \kappa _k  + m_k }}} \right), \hfill \\
\end{gathered}
\end{equation}
where $\eta$ is the SNR threshold. Since the function $\Phi_2$ tends to unity when $\gamma_k\rightarrow 0$ for all $k=1,...,L$.
After taking into account that $\Phi_2^{(N)}(\beta_1,...,\beta_N;\nu;0,...,0)=1$, the asymptotic behavior of $P_o$ is given by

\begin{equation}
\label{eq_SC2}
P_o  \sim \prod\limits_{k = 1}^L {\frac{{\mu _k ^{\mu _k } m_k ^{m_k } \left( {1 + \kappa _k } \right)^{\mu _k } }}
{{\Gamma \left( {\mu _k } \right)\left( {\mu _k \kappa _k  + m_k } \right)^{m_k } }}\left( {\frac{1}
{{\bar \gamma _k }}} \right)^{\mu _k } \eta ^{\mu _k } }.
\end{equation}

Fig. 2 shows the outage probability for SC computed by
(\ref{eq_SC1}), and superimposed simulation results which validate
the analytical derivations. Some comments on the numerical
computation of the $\Phi_2$ function are presented in Appendix V.
In Fig. 2 it is assumed a particular scenario with three branches
for SC in which
$\bar\gamma_1=\bar\gamma_2=\bar\gamma_3=\bar\gamma$,
$\kappa_1=1.2$, $\kappa_2=2.7$, $\kappa_3=3.1$, $\mu_1=4$,
$\mu_2=2$, $\mu_3=1$ and $m_1=m_2=m_3=m$. The curves represent the
outage probability in terms of the average SNR per branch
$\bar\gamma$ for different values of the shaping parameter $m$.
The results for this particular scenario show the significant
impact of shadowing in the system performance, despite the
$\kappa$ parameter which measures the LOS strength is below $5$ dB
at every branch. When $m\rightarrow\infty$ these results are
showing the performance of SC when fading is of $\kappa$-$\mu$
type.

\subsection{Maximal ratio combining with $\kappa$-$\mu$ shadowed fading}
\label{numericalMRC}

In this subsection we consider a receiver with $L$ branches which performs MRC. Each branch experiences $\kappa$-$\mu$ shadowed fading with an instantaneous SNR $\gamma_k\sim{\mathcal
S}_{\kappa\mu}(\bar\gamma_k;\kappa_k,\mu_k, m_k)$ for \mbox{$k=1,...,L$.} It is assumed that all the random variables $\gamma_k$ are mutually independent.
The outage probability is straightforward from Proposition \ref{prop1}
\begin{equation}
\label{eq_MRC1}
\begin{gathered}
  P_o  = \Pr \left\{ {\gamma _{MRC}  \leqslant \eta } \right\} = \left( {\frac{1}
{{\Gamma \left( {1 + \sum\limits_{k = 1}^L {\mu _k } } \right)}}\prod\limits_{k = 1}^L {\frac{{\mu _k ^{\mu _k } m_k ^{m_k } \left( {1 + \kappa _k } \right)^{\mu _k } }}
{{\left( {\mu _k \kappa _k  + m_k } \right)^{m_k } }}\left( {\frac{1}
{{\bar \gamma _k }}} \right)^{\mu _k } } } \right)\eta ^{\sum\limits_{k = 1}^L {\mu _k } }  \times  \hfill \\
  \Phi _2^{(2L)} \left( {\mu _1  - m_1 , \ldots ,\mu _L  - m_{L} ,m_1 , \ldots ,m_L ;1 + \sum\limits_{k = 1}^L {\mu _k } ;} \right.\frac{{ - \mu _1 \left( {1 + \kappa _1 } \right)\eta }}
{{\bar \gamma _1 }}, \ldots ,\frac{{ - \mu _L \left( {1 + \kappa _L } \right)\eta }}
{{\bar \gamma _L }}, \hfill \\
  \hspace{60mm}\frac{{ - \mu _1 \left( {1 + \kappa _1 } \right)}}
{{\bar \gamma _1 }}\frac{{m_1 \eta }}
{{\mu _1 \kappa _1  + m_1 }}, \ldots ,\frac{{ - \mu _L \left( {1 + \kappa _L } \right)}}
{{\bar \gamma _L }}\left. {\frac{{m_L \eta }}
{{\mu _L \kappa _L  + m_L }}} \right), \hfill \\
\end{gathered}
\end{equation}
where $\eta$ is the SNR threshold. The asymptotic behavior of the outage probability when $\gamma_k\rightarrow 0$ for all $k=1,...,L$
is directly obtained from Corollary \ref{cor2}
\begin{equation}
\label{eq_MRC2}
P_o  \sim \left( {\frac{1}
{{\Gamma \left( {1 + \sum\limits_{k = 1}^L {\mu _k } } \right)}}\prod\limits_{k = 1}^L {\frac{{\mu _k ^{\mu _k } m_k ^{m_k } \left( {1 + \kappa _k } \right)^{\mu _k } }}
{{\left( {\mu _k \kappa _k  + m_k } \right)^{m_k } }}\left( {\frac{1}
{{\bar \gamma _k }}} \right)^{\mu _k } } } \right)\eta ^{\sum\limits_{k = 1}^L {\mu _k } } .
\end{equation}

Now we will prove that the bit error probability of MRC systems under $\kappa$-$\mu$ fading can be computed in closed-form.
The bit error probability of many wireless communication systems with coherent detection is determined by
\begin{equation}
\label{eq_MRC3}
P_b  = \sum\limits_{r = 1}^R {\alpha _r \E\left[ {Q\left( {\sqrt {\beta _r \gamma } } \right)} \right]} ,
\end{equation}
where $\{\alpha_r,\beta_r\}_{r=1}^R$ are modulation dependent constants \cite{Paris10b}.
For MRC, the bit error probability can be obtained from (\ref{eq_MRC4}) after integrating by parts.
\begin{equation}
\label{eq_MRC4}
\begin{gathered}
  P_b  = \sum\limits_{r = 1}^R {\alpha _r \int_0^\infty  {Q\left( {\sqrt {\beta _r \gamma } } \right)f_{\gamma _{MRC} } } } \left( \gamma  \right)d\gamma  =  \hfill \\
  \hspace{50mm}\sum\limits_{r = 1}^R {\frac{{\alpha _r \sqrt {\beta _r } }}
{{\sqrt {8\pi } }}\int_0^\infty  {\frac{{e^{ - \frac{{\beta _r }}
{2}\gamma } }}
{{\sqrt \gamma  }}F_{\gamma _{MRC} } } } \left( \gamma  \right)d\gamma . \hfill \\
\end{gathered}
\end{equation}
Substituting (\ref{eq_lema3}) in (\ref{eq_MRC4}) and using \cite[pp. 290, eq. 55]{Srivastava1985}, the following closed-form expression is obtained

\begin{equation}
\label{eq_MRC5}
\begin{gathered}
  P_b  = \left( {\frac{{\Gamma \left( {\frac{1}
{2} + \sum\limits_{k = 1}^L {\mu _k } } \right)}}
{{\Gamma \left( {1 + \sum\limits_{k = 1}^L {\mu _k } } \right)}}\prod\limits_{k = 1}^L {\frac{{\mu _k ^{\mu _k } m_k ^{m_k } \left( {1 + \kappa _k } \right)^{\mu _k } }}
{{\left( {\mu _k \kappa _k  + m_k } \right)^{m_k } }}\left( {\frac{1}
{{\bar \gamma _k }}} \right)^{\mu _k } } } \right)\sum\limits_{r = 1}^R {\frac{{\alpha _r \sqrt {\beta _r } }}
{{\sqrt {8\pi } }}\left( {\frac{2}
{{\beta _r }}} \right)^{\frac{1}
{2} + \sum\limits_{k = 1}^L {\mu _k } }  \times }  \hfill \\
  F_D^{(2L)} \left( {\frac{1}
{2} + \sum\limits_{k = 1}^L {\mu _k } ,\mu _1  - m_1 , \ldots ,\mu _L  - m_{L} ,m_1 , \ldots ,m_L ;} \right.1 + \sum\limits_{k = 1}^L {\mu _k } ;\frac{{ - 2\mu _1 \left( {1 + \kappa _1 } \right)}}
{{\bar \gamma _1 \beta _r }}, \ldots  \hfill \\
  \quad  \ldots ,\frac{{ - 2\mu _L \left( {1 + \kappa _L } \right)}}
{{\bar \gamma _L \beta _r }},\frac{{ - 2\mu _1 \left( {1 + \kappa _1 } \right)}}
{{\bar \gamma _1 \beta _r }}\frac{{m_1 }}
{{\mu _1 \kappa _1  + m_1 }}, \ldots ,\frac{{ - 2\mu _L \left( {1 + \kappa _L } \right)}}
{{\bar \gamma _L \beta _r }}\left. {\frac{{m_L }}
{{\mu _L \kappa _L  + m_L }}} \right), \hfill \\
\end{gathered}
\end{equation}
where $F_D^{(N)}(\cdot)$ is the multivariate Lauricella \mbox{function \cite{Srivastava1985}.}

The outage probability for MRC computed by (\ref{eq_MRC1}) is plotted in fig. 3, including superimposed simulation results which validate the analytical derivations.
The numerical computation of the multivariate $\Phi_2^{(N)}$ function is discussed in Appendix V.
The same particular scenario used for fig. 2 is assumed here; i.e. MRC with three branches in which
$\bar\gamma_1=\bar\gamma_2=\bar\gamma_3=\bar\gamma$, $\kappa_1=1.2$, $\kappa_2=2.7$, $\kappa_3=3.1$, $\mu_1=4$, $\mu_2=2$, $\mu_3=1$ and $m_1=m_2=m_3=m$.
In this figure, the outage probability for MRC is plotted as a function of the average SNR per branch $\bar\gamma$ for different values of $m$.
As in the SC case, the shadowing parameter $m$ has a great influence on the system performance.

The bit error probability for MRC is plotted in fig. 4 when a BPSK modulation is used, i.e. $R=1$, $\alpha_1=1$ and $\beta_1=2$.
Fig. 4 displays both analytical results computed by (\ref{eq_MRC5}) and simulation results.
The numerical computation of the multivariate $F_D^{(N)}$ function is discussed in Appendix V.
Again, the same particular scenario used for fig. 2 and fig. 3 is assumed here.
In this figure, the bit error probability for BPSK with MRC is plotted as a function of the average SNR per branch $\bar\gamma$ for different values of $m$.
As with the outage probability, the shadowing parameter $m$ has a great impact on bit error probability.

\section{Conclusions}
\label{conclusion}

The statistics of the $\kappa$-$\mu$ shadowed fading model have been derived along this paper.
This fading distribution is a natural generalization of the
$\kappa$-$\mu$ fading channel which includes shadowing. Such fading distribution has a clear
physical interpretation, good analytical properties and unifies
the one-side Gaussian, Rayleigh, Nakagami-$m$, Ricean,
$\kappa$-$\mu$ and Ricean shadowed fading
distributions. The three basic statistical characterizations, i.e.
probability density function (PDF), cumulative distribution
function (CDF) and moment generating function (MGF), of the
$\kappa$-$\mu$ shadowed distribution are obtained in closed-form.
It is also shown
that the sum and maximum distributions of independent but arbitrarily distributed $\kappa$-$\mu$
shadowed variates can be expressed in closed-form.
The derived closed-form statistics are given in terms of the bivariate hypergeometric functions
$\Phi_2$ and $F_D$ or the multivariate functions $\Phi_2^{(N)}$ and $F_D^{(N)}$. Numerical methods
to compute these functions have been discussed.
Finally, this set of new statistical results is applied to
the performance analysis of several wireless communication systems.
In particular, the outage probability and the bit error probability for systems employing SC and MRC over
$\kappa$-$\mu$ shadowed fading channels have been investigated.

\vspace{10mm}

\appendices

\section{Proof of Lemma I}

From (\ref{PDF2}), the PDF of $\gamma$ can be computed as
\begin{equation}
\label{appex1}
\begin{gathered}
  f_\gamma  \left( \gamma  \right) = \int_0^\infty  {f_{\gamma \left| \xi  \right.} \left( {\gamma ;\xi } \right)} f_\xi  \left( \xi  \right)d\xi  =\hfill\\
  \hspace{50mm}\frac{{\mu \left( {1 + \kappa } \right)^{\frac{{\mu  + 1}}
{2}} }}
{{\bar \gamma \kappa ^{\frac{{\mu  - 1}}
{2}} }}\left( {\frac{\gamma }
{{\bar \gamma }}} \right)^{\frac{{\mu  - 1}}
{2}} e^{ - \frac{{\mu \left( {1 + \kappa } \right)\gamma }}
{{\bar \gamma }}} \frac{{m^m }}
{{\Gamma \left( m \right)}}\Theta \left( \gamma  \right), \hfill \\
\end{gathered}
\end{equation}
where
\begin{equation}
\label{appex2}
\begin{gathered}
  \Theta \left( \gamma  \right) \triangleq
  \int_0^\infty  {2e^{ - \xi ^2 \left( {\mu \kappa  + m} \right)} \xi ^{2m - \mu } I_{\mu  - 1} \left( {2\mu \xi \sqrt {\frac{{\kappa \left( {1 + \kappa } \right)\gamma }}
{{\bar \gamma }}} } \right)} d\xi . \hfill \\
\end{gathered}
\end{equation}
The quadratic transformation ($t=\xi^2$) in the integral which appears in $\Theta(\gamma)$ yields
\begin{equation}
\label{appex3}
\Theta \left( \gamma  \right) = \int_0^\infty  {e^{ - t\left( {\mu \kappa  + m} \right)} t^{m - \frac{\mu }
{2} - \frac{1}
{2}} I_{\mu  - 1} \left( {2\mu \sqrt {\frac{{\kappa \left( {1 + \kappa } \right)\gamma }}
{{\bar \gamma }}t} } \right)} dt.
\end{equation}
Sequential application of the identities \cite[eq. 4.16.20]{Erdelyi1954} and \mbox{\cite[eq. 9.220-2]{Gradstein2007}}
allows us to express $\Theta(\gamma)$ in terms of the confluent hypergeometric function $_1F_1$
\begin{equation}
\label{appex4}
\begin{gathered}
  \Theta \left( \gamma  \right) = \frac{{\Gamma \left( m \right)}}
{{\Gamma \left( \mu  \right)}}\frac{{\left( {\mu ^2 \kappa \left( {1 + \kappa } \right)} \right)^{\frac{{\mu  - 1}}
{2}} }}
{{\left( {\mu \kappa  + m} \right)^m }} \left( {\frac{\gamma }
{{\bar \gamma }}} \right)^{\frac{{\mu  - 1}}
{2}} {}_1F_1 \left( {m,\mu ;\frac{{\mu ^2 \kappa \left( {1 + \kappa } \right)}}
{{\mu \kappa  + m}}\frac{\gamma }
{{\bar \gamma }}} \right). \hfill \\
\end{gathered}
\end{equation}
The proof is completed after plugging (\ref{appex4}) in (\ref{appex1}) and performing some algebraic simplifications.

\vspace*{10mm}

\section{Proof of Lemma II}

Taking into account the linearity and the frequency shifting properties of the Laplace transform yields
\begin{equation}
\label{appex11}
\begin{gathered}
  \M_\gamma  \left( s \right) = \L\left[ {f_\gamma  \left( \gamma  \right); - s} \right] = \frac{{\mu ^\mu  m^m \left( {1 + \kappa } \right)^\mu  }}
{{\Gamma \left( \mu  \right)\left( {\mu \kappa  + m} \right)^m }}\left( {\frac{1}
{{\bar \gamma }}} \right)^\mu   \times  \hfill \\
  \hspace{70mm} \L\left[ {\gamma ^{\mu  - 1} {}_1F_1 \left( {m,\mu ;\frac{{\mu ^2 \kappa \left( {1 + \kappa } \right)}}
{{\mu \kappa  + m}}\frac{\gamma }
{{\bar \gamma }}} \right);\frac{{\mu \left( {1 + \kappa } \right)}}
{{\bar \gamma }} - s} \right]. \hfill \\
\end{gathered}
\end{equation}
The Laplace transform in (\ref{appex11}) is recorded in \cite[eq. 4.23.1]{Erdelyi1954}; thus, the MGF can be expressed as
\begin{equation}
\label{appex12}
\M_\gamma  \left( s \right) = \frac{{\mu ^\mu  m^m \left( {1 + \kappa } \right)^\mu  }}
{{\Gamma \left( \mu  \right)\left( {\mu \kappa  + m} \right)^m }}\left( {\frac{1}
{{\bar \gamma }}} \right)^\mu  \frac{{\Gamma \left( \mu  \right)}}
{{( - s)^\mu  }}\left( {1 - \frac{{\mu \left( {1 + \kappa } \right)}}
{{s\bar \gamma }}} \right)^{ - \left( {\mu  - m} \right)} \left( {1 - \frac{{\mu \left( {1 + \kappa } \right)}}
{{s\bar \gamma }}\frac{m}
{{\mu \kappa  + m}}} \right)^{ - m} ,
\end{equation}
which after some straightforward algebraic manipulations takes the form expressed in (\ref{eq_lema3}).

\vspace*{10mm}

\section{Proof of Proposition I}
The MGF of the sum distribution is given by
\begin{equation}
\label{appex31}
\begin{gathered}
  \M_\gamma  \left( s \right) = \prod\limits_{k = 1}^M {\frac{{\mu _k ^{\mu _k } m_k ^{m_k } \left( {1 + \kappa _k } \right)^{\mu _k } }}
{{\left( {\mu _k \kappa _k  + m_k } \right)^{m_k } }}\left( {\frac{1}
{{\bar \gamma _k }}} \right)^{\mu _k }  \times }  \hfill \\
  \quad \frac{1}
{{( - s)^{\mu _k } }}\left( {1 - \frac{{\mu _k \left( {1 + \kappa _k } \right)}}
{{s\bar \gamma _k }}} \right)^{ - \left( {\mu _k  - m_k } \right)} \left( {1 - \frac{{\mu _k \left( {1 + \kappa _k } \right)}}
{{s\bar \gamma _k }}\frac{{m_k }}
{{\mu _k \kappa _k  + m_k }}} \right)^{ - m_k } . \hfill \\
\end{gathered}
\end{equation}
From (\ref{appex31}), the PDF of the sum can be expressed as
\begin{equation}
\label{appex32}
\begin{gathered}
  f_\gamma  \left( \gamma  \right) = \L^{ - 1} \left[ {M_\gamma  \left( { - s} \right);\gamma } \right] = \left( {\frac{1}
{{\Gamma \left( {\sum\limits_{k = 1}^M {\mu _k } } \right)}}\prod\limits_{k = 1}^M {\frac{{\mu _k ^{\mu _k } m_k ^{m_k } \left( {1 + \kappa _k } \right)^{\mu _k } }}
{{\left( {\mu _k \kappa _k  + m_k } \right)^{m_k } }}\left( {\frac{1}
{{\bar \gamma _k }}} \right)^{\mu _k } } } \right) \times  \hfill \\
  \quad \L^{ - 1} \left[ {\frac{{\Gamma \left( {\sum\limits_{k = 1}^M {\mu _k } } \right)}}
{{s^{\sum\limits_{k = 1}^M {\mu _k } } }}\prod\limits_{k = 1}^M {\left( {1 + \frac{{\mu _k \left( {1 + \kappa _k } \right)}}
{{s\bar \gamma _k }}} \right)^{ - \left( {\mu _k  - m_k } \right)} } \prod\limits_{k = 1}^M {\left( {1 + \frac{{\mu _k \left( {1 + \kappa _k } \right)}}
{{s\bar \gamma _k }}\frac{{m_k }}
{{\mu _k \kappa _k  + m_k }}} \right)^{ - m_k } ;\gamma } } \right]. \hfill \\
\end{gathered}
\end{equation}
In such arrangement, the right side of (\ref{appex32}) can be identified with \cite[pp. 290, eq. 55]{Srivastava1985} yielding the expression for the PDF stated
in the proposition. To obtain the CDF we can observe again that
\begin{equation}
\label{appex33}
\begin{gathered}
  F_\gamma  \left( \gamma  \right) = \L^{ - 1} \left[ {M_\gamma  \left( { - s} \right)/s;\gamma } \right] = \left( {\frac{1}
{{\Gamma \left( {1 + \sum\limits_{k = 1}^M {\mu _k } } \right)}}\prod\limits_{k = 1}^M {\frac{{\mu _k ^{\mu _k } m_k ^{m_k } \left( {1 + \kappa _k } \right)^{\mu _k } }}
{{\left( {\mu _k \kappa _k  + m_k } \right)^{m_k } }}\left( {\frac{1}
{{\bar \gamma _k }}} \right)^{\mu _k } } } \right) \times  \hfill \\
  \L^{ - 1} \left[ {\frac{{\Gamma \left( {1 + \sum\limits_{k = 1}^M {\mu _k } } \right)}}
{{s^{1 + \sum\limits_{k = 1}^M {\mu _k } } }}\prod\limits_{k = 1}^M {\left( {1 + \frac{{\mu _k \left( {1 + \kappa _k } \right)}}
{{s\bar \gamma _k }}} \right)^{ - \left( {\mu _k  - m_k } \right)} } \prod\limits_{k = 1}^M {\left( {1 + \frac{{\mu _k \left( {1 + \kappa _k } \right)}}
{{s\bar \gamma _k }}\frac{{m_k }}
{{\mu _k \kappa _k  + m_k }}} \right)^{ - m_k } ;\gamma } } \right]. \hfill \\
\end{gathered}
\end{equation}
A new identification of (\ref{appex33}) with \cite[pp. 290, eq. 55]{Srivastava1985} completes the proof.

\vspace*{10mm}

\section{Proof of Lemma IV}
Let us consider the following ancillary function
\begin{equation}
\label{appex41}
\Lambda \left( t \right) = t^{\nu  - 1} \Phi _2^{(N + M)} \left( {\underbrace {\beta _2 , \ldots ,\beta _2 }_N,\underbrace {\beta _2 , \ldots ,\beta _2 }_M;\nu ;\underbrace {x_1 t, \ldots ,x_1 t}_N,\underbrace {x_2 t, \ldots ,x_2 t}_M} \right).
\end{equation}
Considering the Laplace transform of $\Lambda \left( t \right)$ which is obtained with the help of \mbox{\cite[pp. 290, eq. 55]{Srivastava1985},} performing trivial algebraic simplifications in the
transformed domain, and returning again to the $t$-domain with \cite[pp. 290, eq. 55]{Srivastava1985} yields the required property after setting $t=1$.

\vspace*{10mm}

\section{Numerical computation of the functions $\Phi_2$ and $F_D$}

Most of the results derived in this paper involve either the bivariate functions $\Phi_2$ and $F_D$ or the multivariate functions $\Phi_2^{(N)}$ and $F_D^{(N)}$.
Therefore, some comments on the numerical computation of these special functions can be useful for the reader. Each of them will be treated separately.

The bivariate hypergeometric function $F_D$ is the same as the Apell hypergeometric function $F_1$, which is implemented in the most popular scientific software packages,
e.g. MATLAB and MATHEMATICA. Therefore, its computation is straightforward by these software tools.

The bivariate confluent hypergeometric function $\Phi_2$ is defined in the popular mathematical handbook edited by Gradshteyn and Ryzhik; however, it is not yet implemented
in MATLAB and MATHEMATICA. As with the Marcum Q function which has a Bessel series representation,
the $\Phi_2$ function can be expressed as a $_1F_1$ series which is very appropriate for numerical computation
\cite[eq. 4.19]{Bry11}
\begin{equation}
\label{eqextra1}
\Phi_2(b,b';c;w;z)=\sum_{k=0}^{\infty} \frac{(b)_k}{k!(c)_k} w^k  {_1F_1(b';c+k;z)}.
\end{equation}

The multivariate hypergeometric function $F_D^{(N)}$ is not yet implemented in MATLAB and \mbox{MATHEMATICA;} however, it can be easily computed by its Euler-type representation
and standard numerical integration methods
\begin{equation}
\label{eqextra2}
F_D^{(N)}(a,b_1,...,b_N;c;x_1,...,x_N)=\frac{\Gamma(c)}{\Gamma(a)\Gamma(c-a)}\int_0^1 t^{a-1}(1-t)^{c-a-1}(1-x_1 t)^{-b_1}...(1-x_N t)^{-b_N}dt,
\end{equation}
where $\Re[c]>\Re[a]>0$. Note that this last condition is satisfied in the multivariate $F_D^{(N)}$ function which appears in (\ref{eq_MRC5}).

The multivariate confluent hypergeometric function $\Phi_2^{(N)}$ is not yet implemented in MATLAB and \mbox{MATHEMATICA;} however it can be efficiently computed by inverting
its one-dimensional Laplace transform \cite[pp. 290, eq. 55]{Srivastava1985}. Numerical methods for inverting Laplace transforms are exhaustively discussed in \cite{Cohen2007}.

\vspace*{10mm}

\newpage

\begin{table}[t!]
\caption{Common Fading Distributions Derived from the $\kappa$-$\mu$ Shadowed Distribution}
\begin{center}
\small{\begin{tabular}{c | c | c | c | c | c | c | c | c }

\hline\hline
\multicolumn{6}{ c }{Fading Distribution} & \multicolumn{3}{ c }{Parameters of the $\kappa$-$\mu$ Shadowed Distribution}\\

\hline

\hline
\multicolumn{6}{c |}{
\begin{minipage}{5 cm}
\vspace{2mm}
\begin{center}
One-sided Gaussian
\end{center}
\vspace{0mm}
\end{minipage}
}
& \multicolumn{3}{| c }{
\begin{minipage}{5 cm}
\vspace{2mm}
$\underline{\mu}=0.5$,  $\underline{\kappa}\rightarrow 0$, $\underline{m}\rightarrow\infty$
\end{minipage}
}\\

\hline
\multicolumn{6}{c |}{
\begin{minipage}{5 cm}
\vspace{2mm}
\begin{center}
Rayleigh
\end{center}
\vspace{0mm}
\end{minipage}
}
& \multicolumn{3}{| c }{
\begin{minipage}{5 cm}
\vspace{2mm}
$\underline{\mu}=1$, $\underline{\kappa}\rightarrow 0$, $\underline{m}\rightarrow \infty$
\end{minipage}
}\\

\hline
\multicolumn{6}{c |}{
\begin{minipage}{5 cm}
\vspace{2mm}
\begin{center}
Nakagami-$m$,\\ with shaping parameter $m$
\end{center}
\vspace{0mm}
\end{minipage}
}
& \multicolumn{3}{| c }{
\begin{minipage}{5 cm}
\vspace{2mm}
$\underline{\mu}=m$, $\underline{\kappa}\rightarrow 0$, $\underline{m}\rightarrow \infty$
\end{minipage}
}\\

\hline
\multicolumn{6}{c |}{
\begin{minipage}{5 cm}
\vspace{2mm}
\begin{center}
Rician,\\ with shaping parameter $K$
\end{center}
\vspace{0mm}
\end{minipage}
}
& \multicolumn{3}{| c }{
\begin{minipage}{5 cm}
\vspace{2mm}
$\underline{\mu}=1$, $\underline{\kappa}=K$, $\underline{m}\rightarrow \infty$
\end{minipage}
}\\

\hline
\multicolumn{6}{c |}{
\begin{minipage}{5 cm}
\vspace{2mm}
\begin{center}
$\kappa$-$\mu$,\\ with shaping parameters $\kappa$ and $\mu$
\end{center}
\vspace{0mm}
\end{minipage}
}
& \multicolumn{3}{| c }{
\begin{minipage}{5 cm}
\vspace{2mm}
$\underline{\mu}=\mu$, $\underline{\kappa}=\kappa$, $\underline{m}\rightarrow \infty$
\end{minipage}
}\\

\hline
\multicolumn{6}{c |}{
\begin{minipage}{5 cm}
\vspace{2mm}
\begin{center}
Rician shadowed,\\ with shaping parameters $K$ and $m$
\end{center}
\vspace{0mm}
\end{minipage}
}
& \multicolumn{3}{| c }{
\begin{minipage}{5 cm}
\vspace{2mm}
$\underline{\mu}=1$, $\underline{\kappa}=K$, $\underline{m}=m$
\end{minipage}
}\\

\hline\hline

\end{tabular}}
\small{\begin{tabular}{l}
\end{tabular}}
\end{center}
\end{table}

\newpage

\begin{figure}[h]
\centering\includegraphics[width=12cm]{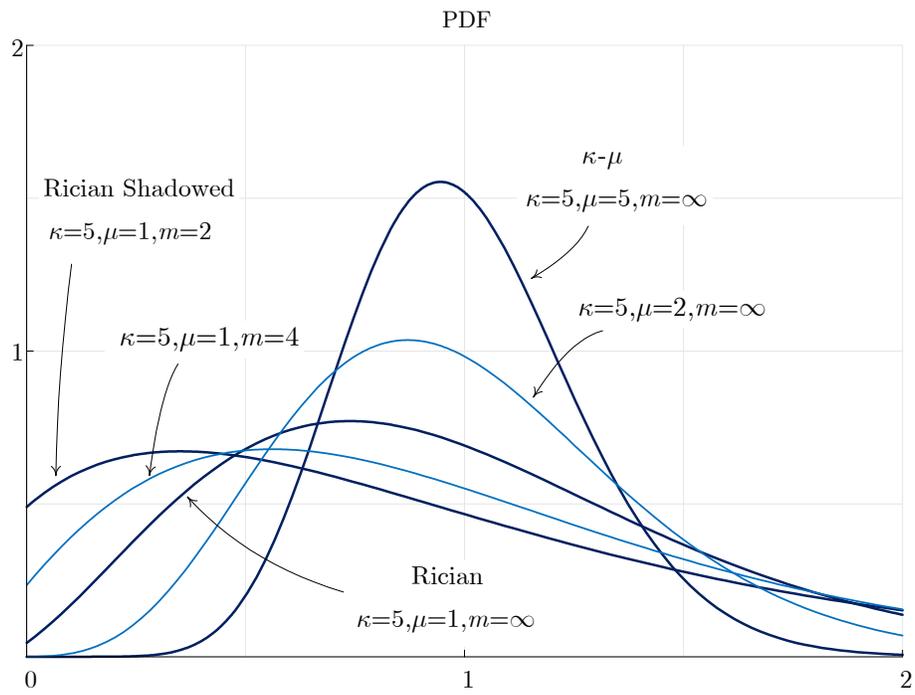}
\begin{center}
\caption{ \footnotesize
PDF of the $\kappa$-$\mu$ shadowed distribution ($\bar\gamma=1$).
}
\label{fig4}
\end{center}
\end{figure}

\newpage

\begin{figure}[h]
\centering\includegraphics[width=12cm]{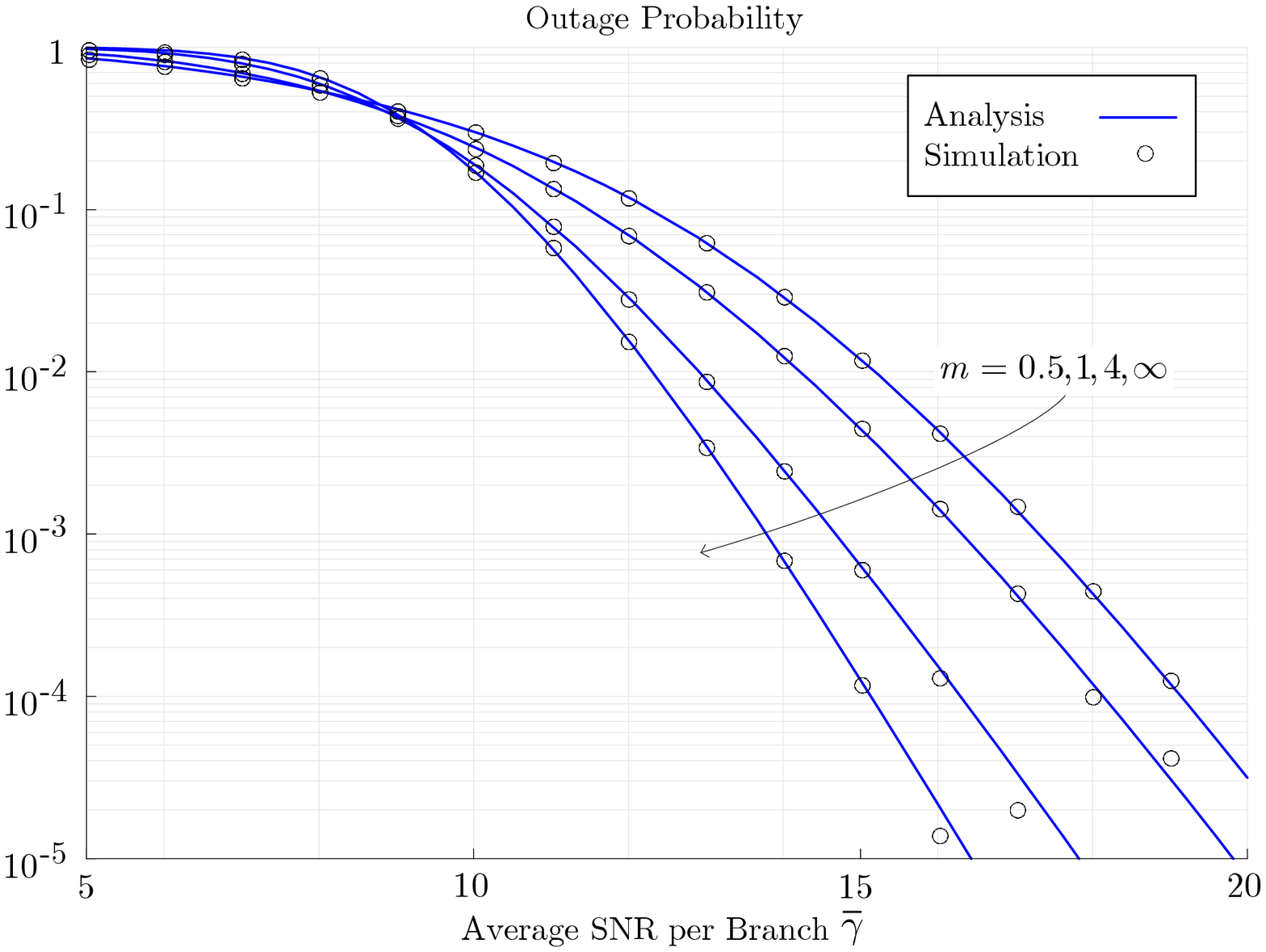}
\begin{center}
\caption{ \footnotesize
Outage probability versus average SNR per branch over $\kappa$-$\mu$ shadowed fading channels. A triple-branch SC scenario is considered, with parameters
$\bar\gamma_1=\bar\gamma_2=\bar\gamma_3=\bar\gamma$, $\kappa_1=1.2$, $\kappa_2=2.7$, $\kappa_3=3.1$, $\mu_1=4$, $\mu_2=2$, $\mu=1$ and $m_1=m_2=m_3=m$.
}
\label{fig1}
\end{center}
\end{figure}

\newpage

\begin{figure}[h]
\centering\includegraphics[width=12cm]{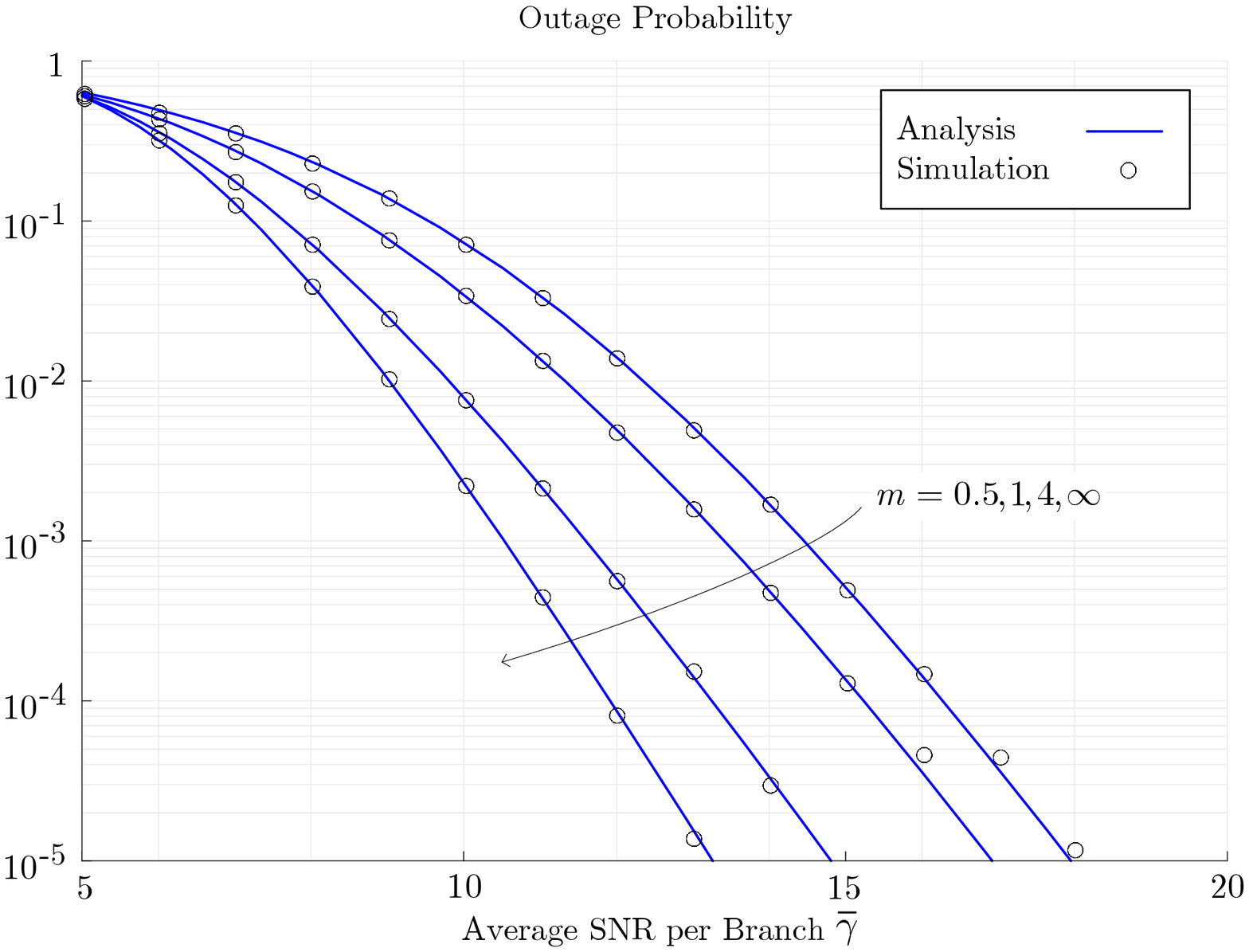}
\begin{center}
\caption{ \footnotesize
Outage probability versus average SNR per branch in $\kappa$-$\mu$ shadowed fading channels. A triple-branch MRC scenario is considered, with parameters
$\bar\gamma_1=\bar\gamma_2=\bar\gamma_3=\bar\gamma$, $\kappa_1=1.2$, $\kappa_2=2.7$, $\kappa_3=3.1$, $\mu_1=4$, $\mu_2=2$, $\mu=1$ and $m_1=m_2=m_3=m$.
}
\label{fig2}
\end{center}
\end{figure}

\newpage

\begin{figure}[h]
\centering\includegraphics[width=12cm]{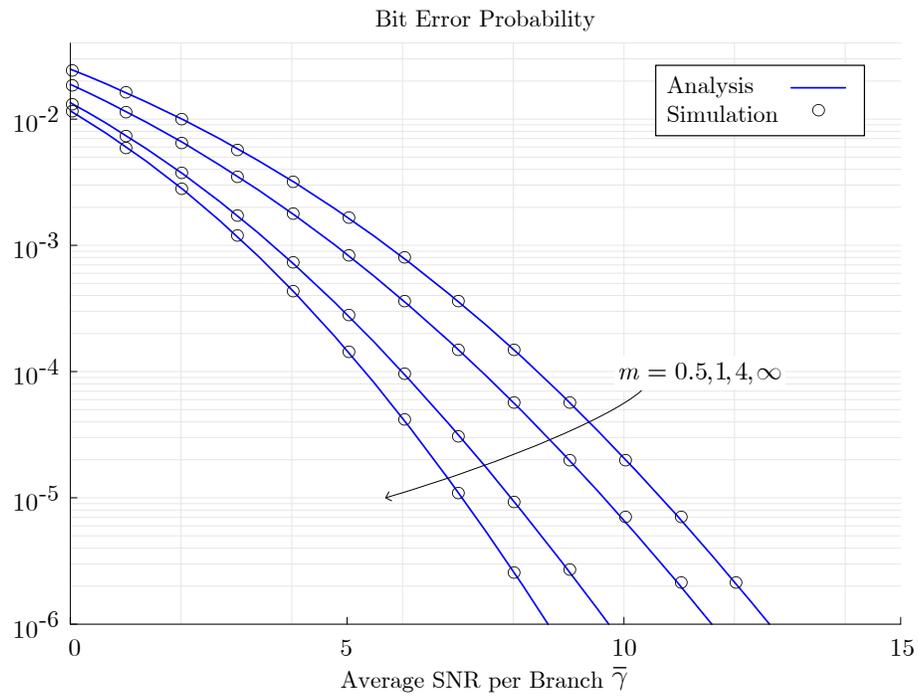}
\begin{center}
\caption{ \footnotesize
Bit error rate versus average SNR per branch in $\kappa$-$\mu$ shadowed fading channels. In this plot BPSK modulation and triple-branch MRC are considered, with parameters
$\bar\gamma_1=\bar\gamma_2=\bar\gamma_3=\bar\gamma$, $\kappa_1=1.2$, $\kappa_2=2.7$, $\kappa_3=3.1$, $\mu_1=4$, $\mu_2=2$, $\mu=1$ and $m_1=m_2=m_3=m$.
}
\label{fig2}
\end{center}
\end{figure}

\end{document}